\newcolumntype{M}[1]{>{\centering\arraybackslash}m{#1}}
\newcolumntype{N}{@{}m{0pt}@{}}
\theoremstyle{plain}
\newtheorem{example}[thm]{Examples}
\newcommand\twoheaduparrow{\mathord{\rotatebox[origin=c]{90}{$\twoheadrightarrow$}}}
\newcommand\twoheaddownarrow{\mathord{\rotatebox[origin=c]{90}{$\twoheadleftarrow$}}}
\def\e#1{\emph{#1}}
\def\t#1{\text{#1}}
\def\mc#1{\mathcal{#1}}
\def\sub{\subseteq~\!\!\!}
\def\dda{\mathord{\twoheaddownarrow}_{\mathcal{I}}}
\def\lli{\ll_{\mathcal{I}}}
\def\dua{\mathord{\twoheaduparrow}_{\mathcal{I}}}
\newcommand{\siconv}{\xrightarrow{\mc{I}}}
\newcommand{\brackets}[1]{\left(#1\right)}
\newcommand{\lra}{\longrightarrow}
\newcommand{\ub}{\operatorname{ub}}
\newcommand{\intr}{\operatorname{int}}
\newcommand{\cl}{\operatorname{cl}}
\newcommand{\SI}{\operatorname{SI}}
\newcommand{\bigsup}{\bigvee}
\newcommand{\biginf}{\bigwedge}
\newcommand{\elb}[1]{\operatorname{elb}(#1)}
\newcommand{\da}{\mathord{\downarrow}}
\newcommand{\ua}{\mathord{\uparrow}}
\begin{document}

\title{A Topological Scott Convergence Theorem}

%\author{%
%Hadrian Andradi\affmark[1,2], Weng Kin Ho\affmark[1]\\
%\affaddr{\affmark[1]National Institute of Education, Nanyang Technological University, 1 Nanyang Walk, Singapore 637616}\\
%\affaddr{\affmark[2]Department of Mathematics, Faculty of Mathematics and Natural Sciences, Universitas GadjahMada, Indonesia 55281}\\
%\email{\{A,B,C,D,E\}@university.edu}\\
%}

\author[H. Andradi]{Hadrian Andradi\rsuper{{a,b}}}	%required
\author[W. K. Ho]{Weng Kin Ho\rsuper{a}}	%optional
\address{\lsuper{a} National Institute of Education, Nanyang Technological University, Singapore}	%required
\email{hadrian.andradi@gmail.com}  %optional
\address{\lsuper{b} Department of Mathematics, Universitas Gadjah Mada, Indonesia}	%optional
\email{wengkin.ho@nie.edu.sg}  %optional
\thanks{The first author is supported by Nanyang Technological University Research Scholarship (RSS);
the second author is supported by `Computing with Infinite Data' (731143-CID)}	%optional
%\thanks{thanks 2, optional.}	%optional

%% required for running head on odd and even pages, use suitable
%% abbreviations in case of long titles and many authors:

%% mandatory lists of keywords and classifications:
\keywords{Scott convergence; topological convergence; irreducibly derived topology; $\mathcal{I}$-continuous spaces; $\mathcal{I}$-stable spaces; $\mathcal{DI}$ spaces}
\subjclass[2010]{54A20,06B35}
%\titlecomment{OPTIONAL comment concerning the title, \eg, if a variant
%or an extended abstract of the paper has appeared elsewehere}
%%%%%%%%%%%%%%%%%%%%%%%%%%%%%%%%%%%%%%%%%%%%%%%%%%%%%%%%%%%%%%%%%%%%%%%%%%%

%% the abstract has to PRECEED the command \maketitle:
%% be sure not to issue the \maketitle command twice!
%\linenumbers
\begin{abstract}
 \noindent Recently, J. D. Lawson encouraged the domain theory community to consider the scientific program of developing domain theory in the wider context of $T_0$ spaces instead of restricting to posets.  In this paper, we respond to this calling with an attempt to formulate a topological version of the Scott Convergence Theorem, i.e., an order-theoretic characterisation of those posets for which the Scott-convergence $\mc{S}$ is topological.  To do this, we make use of the $\mathcal{ID}$ replacement principle  to create topological analogues of well-known domain-theoretic concepts, e.g., $\mathcal{I}$-continuous spaces correspond to continuous posets, as $\mc{I}$-convergence corresponds to $\mc{S}$-convergence.
In this paper, we consider two novel topological concepts, namely, the $\mathcal{I}$-stable spaces and the $\mathcal{DI}$ spaces, and as a result we obtain some necessary (respectively, sufficient) conditions under which the convergence structure $\mathcal{I}$ is topological.
\end{abstract}

\maketitle

\section{Introduction}
\label{sec: intro}
Domain theory can be said to be a theory of approximation on partially ordered sets.
There are two sides of the same domain-theoretic coin: the order-theoretic one and the topological one.  On the order-theoretic side, the facility to approximate is built in the ordered structures via \emph{approximation relations}, and here domain is the generic term that includes all ordered structures that satisfy some approximation axioms.  On the topological side, approximation can be handled by \emph{net convergence structures} in $T_0$ spaces.  Amongst many beautiful results in domain theory, one result stands out in that it epitomises this deep connection between order and topology for domains:
\begin{thm}[Scott convergence theorem~{\cite[Theorem II-1.9]{gierzetal03}}]
\label{thm: scott conv thm}
For a directed complete poset $P$, the following are equivalent:
\begin{enumerate}
\item $\mathcal{S}$-convergence is the topological convergence for the Scott topology, i.e., for all $x \in P$ and all nets $(x_i)_{i \in I}$ in $P$,
    \[
    ((x_i)_{i \in I},x) \in \mathcal{S} \iff (x_i)_{i \in I} \text{ converges to } x \text{ with respect to } \sigma(P);
    \]

\item $P$ is a domain (i.e., a continuous dcpo).
\end{enumerate}
\end{thm}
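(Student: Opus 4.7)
The plan is to prove the two implications separately, exploiting two standard facts about a continuous dcpo $P$: for every $x \in P$, the set $\wb x := \{y \in P : y \ll x\}$ is directed with supremum $x$, and the family $\{\wa y : y \ll x\}$ forms a base of Scott-open neighbourhoods of $x$ in $\sigma(P)$.

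\textbf{Direction $(2) \Rightarrow (1)$.} Assuming $P$ continuous, I would verify the two directions of the biconditional in turn. For ``$\mc{S}\Rightarrow$ topological'', suppose $((x_i),x)\in\mc{S}$, so there is a directed family $D$ of eventual lower bounds of $(x_i)$ with $x\leq\sup D$. Given a Scott-open $U\ni x$, I would pick $y\in U$ with $y\ll x$ from the neighbourhood base; then $y\ll x\leq\sup D$ forces $y\leq d$ for some $d\in D$, and since $d$ is an eventual lower bound, $x_i\in\ua y\subseteq U$ eventually. For the converse, assuming $(x_i)\to x$ in $\sigma(P)$, each $y\in\wb x$ gives a Scott-open neighbourhood $\wa y\ni x$, so eventually $x_i\in\wa y$ and hence $y\leq x_i$; thus $\wb x$ itself is a directed family of eventual lower bounds with supremum $x$, witnessing $\mathcal{S}$-convergence.

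\textbf{Direction $(1) \Rightarrow (2)$.} Fix $x\in P$; the strategy is to build a canonical net convergent to $x$ in $\sigma(P)$ from which way-below approximants can be harvested. I would set
\[
I_x := \{(U,y) : U\in\sigma(P),\ x\in U,\ y\in U\},
\]
ordered by $(U,y)\preceq(U',y')$ iff $U'\subseteq U$; directedness follows from closure of $\sigma(P)$ under finite intersection. The net $\phi:I_x\to P$, $\phi(U,y):=y$, trivially converges to $x$ in $\sigma(P)$. Hypothesis (1) then promotes this to $(\phi,x)\in\mc{S}$, delivering a directed set $D$ of eventual lower bounds of $\phi$ with $x\leq\sup D$. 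For each $d\in D$, pick $(U_0,y_0)$ with $d\leq y$ whenever $(U,y)\succeq(U_0,y_0)$; specialising to $U=U_0$ shows $d$ is a lower bound of $U_0$. Then $d\ll x$: if $x\leq\sup E$ for directed $E$, then $\sup E\in U_0$ (an upper set containing $x$), Scott-openness of $U_0$ produces $e\in E\cap U_0$, and $d\leq e$. Hence $D\subseteq\wb x$, forcing $x=\sup\wb x$; and $\wb x$ is directed because for $y_1,y_2\ll x$, each $y_j\leq d_j\in D$ (as $y_j\ll x\leq\sup D$), so directedness of $D$ furnishes a common upper bound in $D\subseteq\wb x$.

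\textbf{Main obstacle.} The delicate step is the way-below extraction in $(1)\Rightarrow(2)$: one must recognise that eventual lower bounds of the \emph{open-neighbourhood net} are exactly the desired approximants. The key design choice is to index by pairs $(U,y)$ with $y\in U$ rather than by open neighbourhoods alone---this forces each eventual lower bound to dominate \emph{every} element of some Scott-open neighbourhood of $x$, which is precisely the geometric content of the way-below relation on a dcpo.
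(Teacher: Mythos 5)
Your proof is correct: both implications are the standard argument for \cite[Theorem II-1.9]{gierzetal03}, including the key device of the neighbourhood net indexed by pairs $(U,y)$ with $y\in U$ in the direction $(1)\Rightarrow(2)$. Note that the paper itself only recalls this theorem with a citation and gives no proof, so there is nothing in the text to diverge from; the only (harmless) redundancy in your write-up is that the implication ``$\mathcal{S}$-convergence implies topological convergence'' holds in any dcpo and does not actually need the continuity hypothesis you invoke for it.
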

An instance of $((x_i)_{i \in I},x) \in \mathcal{S}$ is denoted by
$x \equiv_{\mathcal{S}} \lim x_i$
(or by $(x_i)_{i \in I} \xrightarrow{\mathcal{S}} x$).
Here the $\mathcal{S}$-convergence is the relation between the class $\Psi P$ of all nets in $P$ and the dcpo $P$ defined by $x \equiv_\mathcal{S} \lim x_i$ if and only if there exists a directed set $D$ of eventual lower bounds of $(x_i)_{i \in I}$ such that $\bigsup D \geq x$.  Note that D. S. Scott's original definition of $\mathcal{S}$-convergence which was formulated for complete lattices used instead the defining condition
\[
x \equiv_{\mathcal{S}} \lim x_i \iff
\underline{\lim} x_i := \bigsup_j \biginf_{i \geq j} x_i \geq x,
\]
and it is not hard to verify that the $\mathcal{S}$-convergence defined for dcpo's when restricted to complete lattices is equivalent to Scott's original definition.
%Because Scott's original formulation makes use of lim-inf's, the $\mathcal{S}$-convergence has been termed as \emph{lim-inf convergence} since its first appearance in \emph{A Compendium of Continuous Lattices}
%(see \cite[p. 104]{gierzetal80}) and subsequently its expanded form
%(see \cite[p. 133]{gierzetal03}).  Incidentally, in both books the term ``lim-inf convergence" is overloaded to include another convergence salient to the Lawson topology (see \cite[p. 158]{gierzetal80} and \cite[p. 231]{gierzetal03}).  In view of this possible confusion,
We shall use the term \emph{Scott convergence} to refer to $\mathcal{S}$-convergence -- crediting this terminology to M. Ern\'{e} who first used it in~\cite{erne81} to mean generalisations of the $\mathcal{S}$-convergence extended to posets (expressed in terms of filters) that apply respectively to the collection of the Frink ideals, the ideals and those ideals whose join exists (denoted by $I_m(P)$, where $m =1,2,3$).  Another generalisation of the Scott convergence theorem for posets was later given independently by B. Zhao and D. Zhao using net convergence \cite{zhaozhao05}.  It can be shown that \cite[Theorem 1]{zhaozhao05} can be derived as a special case of~\cite[Corollary 2.14]{erne81} when $m = 3$.

Our paper aims to formulate and prove a topological variant of Theorem~\ref{thm: scott conv thm} \`{a} la Lawson, which we now explain.  In an invited presentation\footnote{This talk which bore an `extra-terrestrial' title of ``\emph{Close Encounters of the Third Kind: Domain Theory Meets $T_0$ Space Meets Topology}" took place on 27 October 2013 at ISDT'13 in Changsha, China.} at the 6th International Symposium in Domain Theory, J. D. Lawson gave evidence from recent development in domain theory to highlight the intimate relationship between domains and $T_0$ spaces.  In particular, he pointed out that ``several results in domain theory can be lifted from the context of posets to $T_0$ spaces".  We call this research enterprise \emph{developing domain-theory \`{a} la Lawson}.  Forerunners in this line of research include: (1) the topological technique of dcpo-completion of posets~\cite{zhaofan07} can be upgraded to yield a d-completion of $T_0$ spaces (i.e., a certain completion of $T_0$ spaces to yield d-spaces)~\cite{keimellawson09}, and (2) an important order-theoretic result known as Rudin's lemma~\cite{gierzetal83}, which is central to the theory of quasicontinuos domains, also has a topological parallel~\cite{heckmannkeimel13}.

In this paper, we respond to Lawson's call to develop the core of domain theory directly in $T_0$ spaces by establishing a topological parallel of the Scott Convergence Theorem.  Our style of presentation is closer to that of B. Zhao and D. Zhao \cite[Theorem 2.1]{zhaozhao05} than to Ern\'{e}'s
\cite[Corollary 2.14]{erne81} as we consider a net convergence class.
In our attempt to obtain such a topological variant of this result, we adopt the recent approach in~\cite{zhaoho15} and~\cite{heckmannkeimel13} by replacing directed subsets with irreducible subsets.  The motivation for this approach is based on the observation that the Alexandroff irreducible subsets of a poset are precisely the directed subsets and so we call this approach \e{the $\mathcal{ID}$ replacement principle}.
In~\cite{zhaoho15}, Zhao and Ho established that irreducible subsets play a crucial role in the theory of $T_0$ spaces in very much the same way as directed sets do in domain theory.  In particular, since sobriety of topological spaces can be defined in terms of irreducible sets, the approach taken in~\cite{heckmannkeimel13} and~\cite{zhaoho15} allows one to study various types of sobriety for general $T_0$ spaces, which are applicable to the specific case of Scott spaces of posets.

Relying on the $\mathcal{ID}$ replacement principle, some topological analogues of the usual domain-theoretic notions have already been manufactured in~\cite{zhaoho15}:
\begin{enumerate}
\item the irreducibly derived topology on a given $T_0$ space;
\item the relation $\ll_{\SI}$ (denoted by $\ll_{\mathcal{I}}$ in this paper) on $T_0$ spaces; and
\item $\SI$-continuous $T_0$ spaces.
\end{enumerate}

Here we point out that the collection $\mathcal{I}$ of all irreducible subsets of a given space $(X,\tau)$ is a particular instance of a subset system $\mathcal{M}$ of the specialisation poset of $(X,\tau)$ in the sense of~\cite{erne81a,bandelterne84,yao11,zhouzhao07} -- a fact that was neither mentioned nor exploited in \cite{zhaoho15}.

In order to establish a topological parallel of the Scott Convergence Theorem, we consider the notion of $\mathcal{I}$-convergence and $\mathcal{I}$-continuity. More precisely, we consider a net convergence in any $T_0$ space $X$, called $\mathcal{I}$-convergence, which is defined by replacing directed sets with irreducible sets in the definition of $\mathcal{S}$-convergence, i.e., for any $x \in X$ and any net $(x_i)_{i \in I}$ in $X$, $\brackets{x_i}_{i \in I} \siconv x$ if and only if there exists an irreducible set $E$ of eventual lower bounds of $(x_i)_{i \in I}$ such that $\bigsup E \geq x$ -- it is a special instance of the known $\mathcal{M}$-Scott convergence (see, e.g., \cite{yao11}) or lim-inf$_{\mathcal{M}}$-convergence (see, e.g., \cite{zhouzhao07}). Similarly, the definition of $\mathcal{I}$-continuous spaces is formulated by applying $\mathcal{ID}$ replacement principle to the definition of continuous posets. We then restrict our attention to certain subclasses of $T_0$ spaces, i.e., the $\mathcal{I}$-stable spaces and $\mathcal{DI}$ spaces -- which crucially include all Alexandroff topologies on posets.
With the additional ammunition of $\mathcal{I}$-continuity and the new classes of $T_0$ spaces at our disposal, our topological variant of the Scott Convergence Theorem consists of two independent parts:
\begin{enumerate}
\item If a space is $\mathcal{I}$-stable and $\mathcal{I}$-continuous, then the $\mathcal{I}$-convergence in it is topological.
\item A $\mathcal{DI}$ space with topological $\mathcal{I}$-convergence is $\mathcal{I}$-continuous.  In such a case, it is even continuous with respect to its specialisation order.
\end{enumerate}
By specialising our Topological Scott Convergence Theorem to the Alexandroff topologies on posets, we can recover the original Scott Convergence Theorem.

Here is how we organise this paper.  We gather the preliminaries that are needed for the ensuing development in Section~\ref{sec: prelim}.  The needed concepts include some useful facts about irreducibly derived topologies and sup-sober spaces from~\cite{zhaoho15} as well as basic terminologies concerning net convergence.  In Section~\ref{sec: in the light of Ztheory}, we look at  the $\mathcal{ID}$ replacement principle within a broader framework of $\mathcal{M}$-subset system. We then focus on the collection $\mathcal{I}$ of all irreducible sets of a given $T_0$ space in Section~\ref{sec: balanced and spaces}. We introduce a new type of $T_0$ spaces called the $\mathcal{I}$-stable spaces.  The upshot is that for $\mathcal{I}$-stable spaces, $\mathcal{I}$-continuity is indeed sufficient to guarantee that $\mathcal{I}$-convergence is topological.
Then, we consider the other new class of $T_0$ spaces, called the $\mathcal{DI}$ spaces, and arrive to the second part of our topological variant of Scott Convergence Theorem.

Some standard references for domain theory include ~\cite{abramskyjung94,gierzetal03,goubaultlarrecq2013}.
Readers looking for more comprehensive study concerning $\mathcal{M}$-subset system should consult~\cite{bandelterne84, erne81a, yao11, zhouzhao07}.

\section{Preliminaries}
\label{sec: prelim}
In this section, we gather at one place all the preliminaries needed for the present theoretical development, most of which are recalled from~\cite{zhaoho15}, and hence their proofs are omitted.

A nonempty subset $E$ of a topological space $(X,\tau)$ is \emph{irreducible} if for any closed sets $A_1$ and $A_2$, whenever $E \subseteq A_1 \cup A_2$, either $E \subseteq A_1$ or $E \subseteq A_2$ holds.  The collection of all irreducible subsets of $X$ is denoted by $\mathcal{I}$.  Note that a nonempty set $E$ is irreducible if and only if for any finite number $n$ of open sets $U_i$ with $E \cap U_i \neq \varnothing$ ($i < n$), one has $E \cap \bigcap_{i < n} U_i \neq \varnothing$.

Every $T_0$ space $(X,\tau)$ can be viewed as a partially ordered set via its \emph{specialisation order}, denoted by $\leq_\tau$, where $x \leq_{\tau} y$ if $x \in \cl_{\tau}(y)$.  We call the poset $(X,\leq_\tau)$ the \emph{specialisation poset of} $(X,\tau)$.  Henceforth, all order-theoretical statements regarding a $T_0$ space refer to its specialisation order.  For any subset $A$ of a $T_0$ space $(X,\tau)$, the supremum of $A$, if it exists, denoted by $\bigsup_{\tau} A$ or simply $\bigsup A$, is the least upper bound of $A$ with respect to the specialisation order $\leq_\tau$, or simply $\leq$, of the space. We denote the set of all irreducible subsets of a $T_0$ space $X$ whose suprema exist by $\mathcal{I}_{\vee}$.

Scott topology is a prominent topology considered in domain theory.
Since a Scott open subset of a poset $P$ is defined to be an upper set that is in addition inaccessible by directed suprema, the Scott topology $\sigma(P)$ is coarser than the Alexandroff topology $\alpha(P)$ on the poset.  By applying the $\mathcal{ID}$ replacement principle to the definition of a Scott open set, one defines on any $T_0$ space a coarser topology called the \emph{irreducibly derived topology} that mimics the Scott topology on a poset.  More precisely, let $(X,\tau)$ be a $T_0$ space and $U \subseteq X$, define $U \in \tau_{\SI}$ if
(1) $U \in \tau$, and (2) for every $E \in \mathcal{I}_{\vee}$, $\bigsup_\tau E \in U$ implies $E \cap U \neq \varnothing$. It can be readily verified that $\SI(X,\tau) := \brackets{X,\tau_{\SI}}$ is a topological space whose topology is coarser than $\tau$. An open set in $\SI(X,\tau)$ is called \emph{$\SI$-open} and the interior of a subset $A$ of $X$ with respect to $\tau_{\SI}$ is denoted by $\intr_{\tau_{\SI}}(A)$.
Because the Scott-like topology $\tau_{\SI}$ is derived from a topology $\tau$ on the same set $X$, we sometimes refer to $\tau_{\SI}$ as the \emph{Scott derivative} of $\tau$.  Of course, $\SI(\textup{A} P) = \Sigma P$, where $\textup{A} P = (P,\alpha(P))$ and $\Sigma P = (P,\sigma(P))$.

Just as the Scott topology of a given poset does not generally coincide with its Alexandroff topology, so is it with the Scott derivative of a $T_0$ space and its original topology.
Regarding spaces that enjoy the aforementioned coincidence of topologies, we have something to say about them with regards to sobriety. Recall that a topological space $X$ is \emph{sober} if every irreducible closed set is the closure of a unique singleton.  It is well known that the Scott topology on any continuous domain is sober.  A weaker form of sobriety is \emph{bounded-sobriety} which only requires that every irreducible closed set which is bounded above is the closure of a unique singleton. Bounded-sober spaces have been studied in~\cite{mislove99} and~\cite{zhaofan07}. An even weaker form of sobriety is sup-sobriety.  A $T_0$ space $(X,\tau)$ is \emph{sup-sober} if every closed set $F \in \mathcal{I}_{\vee}$ is the closure of a unique singleton; in this case, $F$ is exactly $\cl\left(\{\bigvee F\}\right)$.  Sup-sober spaces are a commonly encountered type of $T_0$ spaces.  Every $T_1$ space is sup-sober.  Every poset $P$ is sup-sober with respect to its upper topology \cite[Corollary 4.9]{zhaoho15}.  The Scott topology on a continuous poset is always sup-sober~\cite[Theorem 7.9]{zhaoho15}, though not all sup-sober spaces are continuous as witnessed by the Johnstone space~\cite{johnstone81}.
Sup-sober spaces have the following pleasant characterisation:
\begin{thmC}[{\cite[Theorem 4.5]{zhaoho15}}] \label{th:SI-infty}
\label{thm: char of sup-sober space}
A $T_0$ space is sup-sober if and only if it is equal to its Scott derivative.
\end{thmC}

In a topological space, approximation can be described by means of net convergence.
Let $X$ be a set.  A \emph{net} $n := (x_i)_{i \in I}$ in $X$ is a mapping from a directed pre-ordered set $(I,\leq)$ to $X$.
Real number sequences, for instance, are nets in the Euclidean space $\mathbb{R}$.  Thus, nets can be viewed as generalised sequences.  We denote the class of all nets in $X$ by $\Psi X$.

Every $x \in X$ generates for each directed set $I$ a \emph{constant net} $\brackets{x}_{i \in I}$ given by $x_i = x$ for all $i \in I$.  Parallel to the notion of subsequence is that of subnet.  A net $n' := (y_j)_{j\in J}$ is a \emph{subnet} of $n := (x_i)_{i\in I}$ if (i) there exists a function $g: J \lra I$ such that $y_j = x_{g(j)}$ for all $j \in J$ and (ii) for each $i \in I$ there exists a $j' \in J$ such that $g(j)\geq i$ whenever $j \geq j'$.

A \emph{net convergence} in a set $X$ is a relation $\mathcal{C}$ between $\Psi X$ and $X$.  We write $n \xrightarrow{\mathcal{C}} x$ if $(n,x)$ belongs to $\mathcal{C}$, in which case we say that the net $n$ \emph{$\mathcal{C}$-converges} to $x$.  A net convergence $\mathcal{C}$ in $X$ is said to satisfy the Constant-net condition (Constants) (and respectively, the Subnet condition (Subnets)) if:
\begin{enumerate}
\item[] \hspace{-12mm}(Constants)  For any $x \in X$, it holds that $\brackets{\brackets{x}_{i\in I},x}\in \mc{C}$.
\item[] \hspace{-12mm}(Subnets)
If $\brackets{n,x}\in\mc{C}$ and $n'$ is a subnet of $n$, then $\brackets{n',x}\in\mc{C}$.
\end{enumerate}

Every net convergence $\mathcal{C}$ in $X$ induces a topology, where the opens are those $U \subseteq X$ satisfying the following condition:
whenever a net $\brackets{x_i}_{i \in I} \xrightarrow{\mc{C}} x$ and $x \in U$, then $x_i \in U$ eventually.  In the opposite direction, every space $(X,\tau)$ induces a net convergence $\mathcal{C}_\tau$ defined by
$(x_i)_{i \in I} \xrightarrow{\mathcal{C}_\tau} x$ if
$\forall U \in \tau.~(x \in U \implies x_i \in U$  eventually).  Here, a property of a net $(x_i)_{i \in I}$ \emph{holds eventually} if there exists an $i_0 \in I$ such that for all $i \geq i_0$, the property holds for $x_i$.

Given a set $X$ and a topology $\tau$ on $X$, when $n \xrightarrow{\mathcal{C}_\tau} x$, we say that \emph{$n$ converges to $x$ with respect to the topology $\tau$}. We shall sometimes write $n\xrightarrow{\tau}x$ to indicate $n\xrightarrow{\mathcal{C}_{\tau}}x$.  A net convergence $\mathcal{C}$ in a set $X$ is said to be \emph{topological} if there is a topology $\tau$ on $X$ that induces it, i.e., $\mathcal{C} = \mathcal{C}_{\tau}$. In fact, for a topological convergence class, the topology inducing it is unique owing to the following proposition:

\begin{propC}[{\cite[page 76]{kelley55}}] \label{prop:top_containment_char}
Let $\tau$ and $\sigma$ be topologies on a set $X$.
Then $\tau \sub \sigma$ if and only if $\mc{C}_{\sigma} \sub \mc{C}_{\tau}$.
\end{propC}

In what follows, all topological spaces are assumed to be $T_0$ spaces.

\section{Generalised continuity and Scott convergence}
\label{sec: in the light of Ztheory}

Given a poset $P = (X,\leq)$, let $\mathcal{M}$ be a collection of subsets of the set $X$ containing all singletons, and denote by
\begin{enumerate}
\item[] $\ub(A)$ the set of all upper bounds of $A \subseteq P$,
\item[] $\elb{n}$ the set of all eventual lower bounds of a net $n$ in $P$,
\item[] $\mathcal{M}_{\vee}$ the set of all members of $\mathcal{M}$ that have a supremum,
\item[] $\mathcal{M}^{\wedge}$ the set of all lower sets generated by members of $\mathcal{M}$.
\end{enumerate}
Further, define the $\mathcal{M}$-below relation (cf. \cite{baranga96, bandelterne83, erne99, venugopalan86}) by
$$x \ll_{\mathcal{M}} y \iff \forall M \in \mathcal{M}_{\vee} (y \leq \bigvee M \implies x \in \da M)$$
and put (cf. \cite{baranga97, menon96})
$$\sigma_{\mathcal{M}}= \{U = \ua U \mid \forall M \in \mathcal{M}_{\vee} (\bigvee M \in U \implies U \cap M \neq \varnothing) \}.$$
This set is closed under arbitrary unions, hence the notion of interior is still relevant, but not always a topology. But from the definition of irreducible sets, given a $T_0$ space $(X,\tau)$, it follows that $\sigma_{\mathcal{M}}\cap \tau$ is a topology whenever $\mathcal{M}$ is contained in the collection of all irreducible sets in $(X,\tau)$.

For $Y \subseteq P$, put
%$$
%\intr_{\sigma_{\mathcal{M}}}(Y) = \bigcup \{U \in %\sigma_{\mathcal{M}} \mid U \subseteq Y\}
%$$
%and
$$
\twoheaddownarrow_{\mathcal{M}} Y :=
\{x \in P \mid \exists y \in Y.~(x \ll_\mathcal{M} y)\}
~ \text{ and }~
\twoheaduparrow_\mathcal{M} Y :=
\{x \in P \mid \exists y \in Y.~(y \ll_\mathcal{M} x)\}.
$$
For any element $y$ of $P$, the set $\twoheaddownarrow_{\mathcal{M}} y$ (resp., $\twoheaduparrow_{\mathcal{M}} y$) refers to $\twoheaddownarrow_{\mathcal{M}} \{y\}$ (resp., $\twoheaduparrow_{\mathcal{M}} \{y\}$).

\begin{rem}
In some works, instead of $\mathcal{M}$, the letter $\mathcal{Z}$ is used when ones study a subset selection, i.e., a certain assignment assigning each poset to a certain collection of posets (cf.~\cite{bandelterne83, baranga96, erne99, venugopalan86, wrightwagnerthatcer78, zhao92}). Although we know that most of definitions and results presented in this section can be performed in $\mathcal{Z}$-subset selection language, we choose to use the letter $\mathcal{M}$ as in \cite{erne81a, bandelterne84, yao11, zhouzhao07} since our main focus is on a particular collection of subsets of the underlying set of a given poset.
\end{rem}

We now consider the generalised version of $\mathcal{S}$-convergence as follows (c.f. \cite{yao11, zhouzhao07}):
\begin{defi}[$\mathcal{M}$-convergence]
A net $n$ in a poset $P$ \emph{$\mathcal{M}$-converges} to a point $x$, denoted by $n \xrightarrow{\mathcal{M}} x$, if there exists an $M \in \mathcal{M}_{\vee}$ with $M \subseteq \elb{n}$ and $x \leq \bigsup M$. We write $\tau_{\mathcal{M}}$ to denote the topology induced by $\mathcal{M}$-convergence.
\end{defi}

It follows easily that:
\begin{lem} \label{lem: I satisfies Con and Sub}
The $\mathcal{M}$-convergence in any poset satisfies the conditions (Constants) and (Subnets).
\end{lem}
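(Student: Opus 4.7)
The plan is to verify both conditions directly from the definition of $\mathcal{M}$-convergence, using only that $\mathcal{M}$ contains all singletons (for Constants) and the defining properties of a subnet (for Subnets). Neither part requires any nontrivial machinery, so this is essentially bookkeeping; there is no serious obstacle to anticipate.

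For (Constants), fix $x \in P$ and a directed index set $I$, and consider the constant net $(x)_{i \in I}$. I would take the witnessing set to be the singleton $M := \{x\}$. Since $\mathcal{M}$ is assumed to contain all singletons and $\bigvee\{x\} = x$ exists, we have $M \in \mathcal{M}_{\vee}$. Every entry of the constant net equals $x$, so $x$ is a lower bound of $x_i$ for every $i \in I$, hence $M = \{x\} \subseteq \elb{(x)_{i \in I}}$. Finally $x \leq x = \bigvee M$, so the $\mathcal{M}$-convergence condition is met.

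For (Subnets), suppose $n = (x_i)_{i \in I} \xrightarrow{\mathcal{M}} x$ witnessed by $M \in \mathcal{M}_{\vee}$ with $M \subseteq \elb{n}$ and $x \leq \bigvee M$, and let $n' = (y_j)_{j \in J}$ be a subnet with associated map $g : J \to I$. The same $M$ will witness that $n' \xrightarrow{\mathcal{M}} x$, and the only nontrivial check is that $M \subseteq \elb{n'}$. For this, pick any $m \in M$; since $m \in \elb{n}$, there is $i_0 \in I$ with $m \leq x_i$ for all $i \geq i_0$, and by the second clause in the definition of a subnet there is $j' \in J$ with $g(j) \geq i_0$ whenever $j \geq j'$. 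Then $y_j = x_{g(j)} \geq m$ for all such $j$, so $m \in \elb{n'}$. Combined with $x \leq \bigvee M$ (which is unaffected by passing to the subnet), this gives $n' \xrightarrow{\mathcal{M}} x$, completing the verification.
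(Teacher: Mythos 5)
Your proof is correct and is precisely the routine verification the paper has in mind (the paper omits the proof, introducing the lemma with ``It follows easily that''): the singleton $\{x\}\in\mathcal{M}_{\vee}$ witnesses (Constants), and reusing the same witness $M$ together with clause (ii) of the subnet definition gives $M\subseteq\elb{n'}$ for (Subnets). Nothing further is needed.
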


\begin{lem}\label{lem: tau_m char}
For any  poset $P$, $U \in \tau_{\mathcal{M}}$ if and only if the following conditions hold:
\begin{enumerate}
\item[{\rm (1)}] $U = \ua U$,
\item[{\rm (2)}] for every $M \in \mathcal{M}_{\vee}$ with $\bigvee M \in U$, there exists a finite subset $N$ of $M$ such that $\ub(N) \subseteq U$.
\end{enumerate}
\end{lem}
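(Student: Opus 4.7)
The plan is to unfold the two sides against the standard description of the convergence-induced topology: $U \in \tau_{\mathcal{M}}$ means exactly that whenever a net $n \xrightarrow{\mathcal{M}} x$ and $x \in U$, then $x_i \in U$ eventually. Both directions then reduce to recognising the right nets and the right witnesses $M \in \mathcal{M}_{\vee}$. Throughout I will use only that $\mathcal{M}$ contains all singletons and that $\mathcal{M}$-convergence is defined via an $M \in \mathcal{M}_{\vee}$ with $M \subseteq \elb(n)$ and $x \leq \bigvee M$.

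For the ($\Rightarrow$) direction I would handle condition~(1) first by a constant-net trick: given $x \in U$ with $x \leq y$, take the constant net $(y)_{i \in I}$. The singleton $\{y\}$ lies in $\mathcal{M}_{\vee}$, is contained in $\elb((y)_{i \in I}) = \da y$, and $x \leq y = \bigvee\{y\}$, so $(y)_{i \in I} \xrightarrow{\mathcal{M}} x$. Since $x \in U \in \tau_{\mathcal{M}}$, the net must be eventually in $U$, forcing $y \in U$. Condition~(2) is the main obstacle and I would argue it by contradiction. Suppose some $M \in \mathcal{M}_{\vee}$ has $\bigvee M \in U$ but $\ub(N) \not\subseteq U$ for every finite $N \subseteq M$. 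Let $\mathcal{F}$ denote the finite subsets of $M$, directed by inclusion, and for each $N \in \mathcal{F}$ choose $x_N \in \ub(N) \setminus U$. The point of the construction is that every $m \in M$ is an eventual lower bound of $(x_N)_{N \in \mathcal{F}}$: once $N \supseteq \{m\}$, we have $m \leq x_N$. Hence $M \subseteq \elb((x_N))$ and $\bigvee M \leq \bigvee M$, giving $(x_N) \xrightarrow{\mathcal{M}} \bigvee M \in U$; but by design no $x_N$ lies in $U$, contradicting $U \in \tau_{\mathcal{M}}$.

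The ($\Leftarrow$) direction is routine. Assuming (1) and (2), let $n = (x_i)_{i \in I} \xrightarrow{\mathcal{M}} x \in U$, witnessed by $M \in \mathcal{M}_{\vee}$ with $M \subseteq \elb(n)$ and $x \leq \bigvee M$. Upward closure of $U$ gives $\bigvee M \in U$, so (2) yields a finite $N \subseteq M$ with $\ub(N) \subseteq U$. Finiteness of $N$ together with $N \subseteq \elb(n)$ lets me pick, for each $m \in N$, an index past which $m \leq x_i$, and then take an upper bound of these finitely many indices in the directed set $I$; past that index every $x_i$ lies in $\ub(N) \subseteq U$, so $x_i \in U$ eventually, giving $U \in \tau_{\mathcal{M}}$.

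The only genuinely non-mechanical step is the net construction in the forward direction of~(2): one must see that indexing by finite subsets of $M$ (rather than by $M$ itself) makes every element of $M$ an eventual lower bound, which is exactly what $\mathcal{M}$-convergence to $\bigvee M$ demands. Everything else is bookkeeping with the definitions of $\elb$, $\ub$, and constant nets.
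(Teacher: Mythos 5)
Your proof is correct and follows essentially the same route as the paper's: both directions rest on the same key construction of a net indexed by the finite subsets of $M$, with terms taken from their sets of upper bounds, so that every element of $M$ becomes an eventual lower bound and the net $\mathcal{M}$-converges to $\bigvee M$. The only cosmetic difference is that you run the forward direction of (2) by contradiction, choosing one witness $x_N \in \ub(N) \setminus U$ per finite $N$, whereas the paper argues directly with the net of all pairs $(u,N)$ with $u \in \ub(N)$; the underlying idea is identical.
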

\begin{proof}
Let $V\in \tau_{\mathcal{M}}$.
\begin{enumerate}
\item[(1)] Let $x\in \ua U$. There exists $u \in U$ such that $u \leq x$. Clearly, the constant net generated by $x$ $\mc{M}$-converges to $u$. Hence $x\in U$.
\item[(2)] Let $M\in\mathcal{M}_{\vee}$ such that $\bigvee M\in U$. Define $$I_M = \{(u, N) \mid u \in \ub(N), N\subseteq_{\rm fin} M\}$$ and equip it with the order defined as follows: $(u_1, N_1) \leq (u_2, N_2)$ if and only if $\ub(N_2) \subseteq \ub(N_1)$. Clearly, $I_M$ is directed. We consider the net $n:=(x_i)_{i\in I_M}$, with $x_{(u, N)} = u$. It can be verified that $n \xrightarrow{\mathcal{M}} \bigvee M$. Since $U\in \tau_{\mathcal{M}}$, there exists an $i_0:=(u, N )\in I_M$ such that $j\geq i_0$ implies $x_j \in U$. Then for each $t\in\ub(N)$ it holds that $j:=(t, N)\geq i_0$, we have that $x_j = t \in U$. Therefore $\ub(N)\subseteq U$.
\end{enumerate}
Conversely, let $U$ satisfy (1) and (2), and $n \xrightarrow{\mathcal{M}} x\in U$. There exists an $M\in\mathcal{M}_{\vee}$ such that $x\leq \bigvee M$ and $M\subseteq\elb{n}$. By (1), $\bigvee M \in U$. By (2), there exists a finite subset $N$ of $M$ such that $\ub(N) \subseteq U$. Since $N$ is finite and $N \subseteq \elb{n}$, $n$ is eventually in $\ub(N)$. Therefore $U \in \tau_{\mc{M}}$.
\end{proof}

\begin{lem} \label{lem: topology ind by I conv is finer than the irred der topology}
For every poset $P$, $\sigma_\mathcal{M} \subseteq \tau_{\mathcal{M}}$, i.e., the topology induced by the $\mathcal{M}$-convergence is finer than the topology generated by $\sigma_{\mathcal{M}}$.
\end{lem}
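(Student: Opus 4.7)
The plan is to unwrap both definitions and observe that the containment is essentially forced by a one-step argument: pick an eventual lower bound from $M$ that lies in $U$, and conclude via the upper-set property of $U$.

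First, fix $U\in\sigma_{\mathcal{M}}$. To show $U\in\tau_{\mathcal{M}}$, by the definition of the topology induced by a net-convergence recalled in Section~\ref{sec: prelim}, I must verify that for every net $n=(x_i)_{i\in I}$ with $n\xrightarrow{\mathcal{M}} x$ and $x\in U$, the net $n$ is eventually in $U$. So suppose such an $n$ and $x$ are given. Unpacking $\mathcal{M}$-convergence, there exists $M\in\mathcal{M}_{\vee}$ with $M\subseteq\elb{n}$ and $x\leq\bigvee M$.

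Next, I would use the two defining clauses of $\sigma_{\mathcal{M}}$. Since $U$ is an upper set and $x\leq\bigvee M$ with $x\in U$, it follows that $\bigvee M\in U$. Then the inaccessibility clause of $\sigma_{\mathcal{M}}$ gives some $m\in M\cap U$. Because $m\in M\subseteq\elb{n}$, there is an $i_0\in I$ such that $x_i\geq m$ whenever $i\geq i_0$. Invoking once more that $U$ is upward closed, $x_i\in U$ for all $i\geq i_0$, so $n$ is eventually in $U$, as required.

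There is no real obstacle here: the argument is a direct chase of definitions, and the only ingredients actually used are (a) that $U$ is an upper set, (b) the $M$-inaccessibility condition defining $\sigma_{\mathcal{M}}$, and (c) the very definition of $\elb{n}$ together with the fact that $M\in\mathcal{M}_{\vee}$ witnesses the $\mathcal{M}$-convergence of $n$ to $x$. Note that no finiteness considerations (as in Lemma~\ref{lem: tau_m char}(2)) are needed for this inclusion; those become relevant only for the converse characterisation of $\tau_{\mathcal{M}}$. This also indicates why the reverse inclusion $\tau_{\mathcal{M}}\subseteq\sigma_{\mathcal{M}}$ fails in general: openness in $\tau_{\mathcal{M}}$ only guarantees that $\ub(N)\subseteq U$ for some finite $N\subseteq M$, which is strictly weaker than $M\cap U\neq\varnothing$.
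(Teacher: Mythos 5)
Your proof is correct and follows essentially the same route as the paper: the paper simply cites Lemma~\ref{lem: tau_m char}, whose sufficiency direction is exactly the argument you write out (with the finite set $N$ there instantiated as the singleton $\{m\}$ for your chosen $m\in M\cap U$). Your direct verification, and your closing remark about why finiteness is irrelevant for this inclusion, are both accurate.
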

\begin{proof}
It is immediate from Lemma \ref{lem: tau_m char}.
\end{proof}

\begin{rem} \label{rem: strict containment}
Let $P:=\{a,b,c\}$ be equipped with a partial order $\leq$ such that $a$ and $b$ are incomparable and $a,~b \leq c$. Let $\mathcal{N}$ be the set of all antichains in $P$. Since $\{c\} \in \tau_{\mathcal{N}}-\sigma_{\mathcal{N}}$, it follows that $\sigma_{\mathcal{N}} \subsetneq \tau_{\mathcal{N}}$. In this case, the collection $\sigma_\mathcal{N}$ is not a topology since $\ua a, \ua b \in \sigma_{\mathcal{N}}$, but $\ua a \cap \ua b = \{c\} \notin \sigma_{\mathcal{N}}$.
\end{rem}

\begin{prop}
\label{prop: transitivity of lli}
For any $u,~x,~y$ and $z$ in a poset $P$,
\begin{enumerate}
\item[{\rm (1)}] $x \ll_{\mathcal{M}} y$ implies $x \leq y$.

\item[{\rm (2)}] $u \leq x \ll_{\mathcal{M}} y\leq z$ implies $u \ll_{\mathcal{M}} z$.

\item[{\rm (3)}] $\intr_{\sigma_{\mathcal{M}}}(\ua y) \subseteq \twoheaduparrow_{\mathcal{M}} y$.
\end{enumerate}
\end{prop}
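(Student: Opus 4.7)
The plan is to handle the three clauses in turn, with the first two being essentially immediate applications of the definition of $\ll_{\mathcal{M}}$, and the third requiring us to unpack the definition of the $\sigma_{\mathcal{M}}$-interior.

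For (1), the key observation is that $\mathcal{M}$ is assumed to contain all singletons. So the singleton $M := \{y\}$ lies in $\mathcal{M}_{\vee}$ with $\bigvee M = y$. Applying the definition of $x \ll_{\mathcal{M}} y$ to this $M$ yields $x \in \mathord{\downarrow} M = \mathord{\downarrow} y$, i.e., $x \leq y$.

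For (2), let $M \in \mathcal{M}_{\vee}$ with $z \leq \bigvee M$. Then $y \leq z \leq \bigvee M$, so the hypothesis $x \ll_{\mathcal{M}} y$ produces some $m \in M$ with $x \leq m$. Composing with $u \leq x$ gives $u \leq m$, so $u \in \mathord{\downarrow} M$. Since $M$ was arbitrary, this establishes $u \ll_{\mathcal{M}} z$.

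Clause (3) is the only one that demands any thought, and this is where I expect the main (still mild) obstacle to lie: one must remember that although $\sigma_{\mathcal{M}}$ need not be a topology, it is closed under arbitrary unions, so $\intr_{\sigma_{\mathcal{M}}}(\ua y)$ denotes the largest member of $\sigma_{\mathcal{M}}$ contained in $\ua y$. Fix $x \in \intr_{\sigma_{\mathcal{M}}}(\ua y)$; then there is some $U \in \sigma_{\mathcal{M}}$ with $x \in U \subseteq \ua y$. To check $y \ll_{\mathcal{M}} x$, take any $M \in \mathcal{M}_{\vee}$ with $x \leq \bigvee M$. Since $U$ is an upper set (by definition of $\sigma_{\mathcal{M}}$), we get $\bigvee M \in U$; the defining property of $\sigma_{\mathcal{M}}$ then supplies some $m \in U \cap M$. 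Because $U \subseteq \ua y$, we have $y \leq m$, so $y \in \mathord{\downarrow} M$. As $M$ was arbitrary, $y \ll_{\mathcal{M}} x$, i.e., $x \in \twoheaduparrow_{\mathcal{M}} y$, completing the inclusion.
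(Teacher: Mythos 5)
Your proof is correct and is precisely the routine verification that the paper omits (the proposition is stated without proof): (1) uses that $\mathcal{M}$ contains singletons, (2) unwinds the definition of $\ll_{\mathcal{M}}$ directly, and (3) correctly uses that $\sigma_{\mathcal{M}}$ is closed under unions so the interior is a member of $\sigma_{\mathcal{M}}$, then applies its two defining properties. Nothing to add.
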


\begin{defi}[$\mathcal{M}$-continuous poset (\cite{bandelterne83,baranga97,erne99,venugopalan86})]
\label{defi: Z-continuous poset}
A poset $P$ is \emph{$\mathcal{M}$-continuous} if each $y \in P$ satisfies the conditions
$\twoheaddownarrow_{\mathcal{M}} y \in \mathcal{M}^{\wedge}$ and
$y = \bigsup \twoheaddownarrow_{\mathcal{M}} y$.
If, moreover, $\ll_{\mathcal{M}}$ satisfies the interpolation property (i.e., $x \ll_{\mathcal{M}} z$ always implies that $x \ll_{\mathcal{M}} y \ll_{\mathcal{M}} z$ for some $y$), then $P$ is \emph{strongly $\mathcal{M}$-continuous}.
\end{defi}
\begin{rem}
Our definition of $\mathcal{M}$-continuous poset follows that in~\cite{baranga97} and differs from that in~\cite[p.~53]{erne99}. More precisely, saying a poset $P$ is $\mathcal{M}$-continuous posets in the present definition is equivalent to saying that $P$ is $\mathcal{Z}_{\vee}$-precontinuous poset in~\cite{erne99}, where $\mathcal{Z}_{\vee}P = \mathcal{M}_\vee$.
\end{rem}

The lemmas below follow immediately from the definitions of $\mathcal{M}$-convergence and $\mathcal{M}$-continuity.
\begin{lem} \label{lem:Z-conv_equiv}
Let $P$ be a poset and $n$ a net in $P$. If $n \xrightarrow{\mathcal{M}} y$, then
$\twoheaddownarrow_{\mathcal{M}} y \subseteq \elb{n}$.
The converse holds if $P$ is $\mathcal{M}$-continuous.
\end{lem}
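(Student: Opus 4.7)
The plan is to prove each direction directly from the definitions, using Lemma~\ref{lem:Z-conv_equiv} as little more than an unpacking exercise.

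For the forward direction, assume $n \xrightarrow{\mathcal{M}} y$, so by definition there exists some $M \in \mathcal{M}_{\vee}$ with $M \subseteq \elb{n}$ and $y \leq \bigvee M$. Take any $x \in \twoheaddownarrow_{\mathcal{M}} y$, i.e., $x \ll_{\mathcal{M}} y$. Since $M \in \mathcal{M}_{\vee}$ and $y \leq \bigvee M$, the definition of $\ll_{\mathcal{M}}$ yields $x \in \da M$, so $x \leq m$ for some $m \in M \subseteq \elb{n}$. Since eventual lower bounds are downward closed (any element below an eventual lower bound is itself an eventual lower bound of $n$), we conclude $x \in \elb{n}$. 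Hence $\twoheaddownarrow_{\mathcal{M}} y \subseteq \elb{n}$.

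For the converse, assume $P$ is $\mathcal{M}$-continuous and $\twoheaddownarrow_{\mathcal{M}} y \subseteq \elb{n}$. By $\mathcal{M}$-continuity, $\twoheaddownarrow_{\mathcal{M}} y \in \mathcal{M}^{\wedge}$, so there exists $M \in \mathcal{M}$ with $\twoheaddownarrow_{\mathcal{M}} y = \da M$. Then $M \subseteq \da M = \twoheaddownarrow_{\mathcal{M}} y \subseteq \elb{n}$. Moreover, $\mathcal{M}$-continuity provides $y = \bigvee \twoheaddownarrow_{\mathcal{M}} y = \bigvee \da M$, and since $\bigvee \da M = \bigvee M$ whenever either exists, we obtain $M \in \mathcal{M}_{\vee}$ with $\bigvee M = y$. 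This witnesses $n \xrightarrow{\mathcal{M}} y$.

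There is no real obstacle here: the forward direction is a direct application of the definition of $\ll_{\mathcal{M}}$ to the witness $M$ provided by $\mathcal{M}$-convergence, and the converse simply uses the two clauses of $\mathcal{M}$-continuity to supply a witnessing $M \in \mathcal{M}_{\vee}$ lying inside $\elb{n}$. The one subtle point worth flagging is the passage from $\twoheaddownarrow_{\mathcal{M}} y = \da M$ to $\bigvee M = \bigvee \twoheaddownarrow_{\mathcal{M}} y$, which relies on the elementary fact that a set and its downward closure have the same upper bounds and hence the same supremum whenever it exists.
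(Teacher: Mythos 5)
Your proof is correct and is exactly the ``immediate from the definitions'' argument the paper has in mind (the paper omits the proof entirely, asserting it follows directly from the definitions of $\mathcal{M}$-convergence and $\mathcal{M}$-continuity). Both directions are handled properly, including the two points actually worth checking: that $\elb{n}$ is a lower set, and that $M$ and $\da M = \twoheaddownarrow_{\mathcal{M}} y$ share the same upper bounds and hence the same supremum.
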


\begin{lem}\label{lem:interior_in-Z-cts}
If $P$ is $\mathcal{M}$-continuous, then $\intr_{\sigma_{\mathcal{M}}}(\ua Y) \subseteq \twoheaduparrow_{\mathcal{M}} Y$ holds for all $Y \subseteq P$.
\end{lem}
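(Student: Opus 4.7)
The plan is to generalise the singleton case already recorded in Proposition~\ref{prop: transitivity of lli}(3) by exploiting $\mathcal{M}$-continuity at the particular point $x$ we wish to place into $\twoheaduparrow_{\mathcal{M}}Y$. Concretely, I would fix $x \in \intr_{\sigma_{\mathcal{M}}}(\ua Y)$ and set $U := \intr_{\sigma_{\mathcal{M}}}(\ua Y)$. The first thing to record is that $U$ itself lies in $\sigma_{\mathcal{M}}$: the paper has already noted that $\sigma_{\mathcal{M}}$, although not necessarily a topology, is closed under arbitrary unions, so $U$ is the largest member of $\sigma_{\mathcal{M}}$ contained in $\ua Y$ and in particular $U \in \sigma_{\mathcal{M}}$ and $U \subseteq \ua Y$.

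Next I would unpack $\mathcal{M}$-continuity at $x$: by Definition~\ref{defi: Z-continuous poset} we have $\twoheaddownarrow_{\mathcal{M}} x \in \mathcal{M}^{\wedge}$ and $x = \bigsup \twoheaddownarrow_{\mathcal{M}} x$. So there is some $M \in \mathcal{M}$ with $\twoheaddownarrow_{\mathcal{M}} x = \da M$. Since $\da M$ and $M$ share upper bounds, $\bigvee M$ exists and equals $x$, giving $M \in \mathcal{M}_{\vee}$.

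The key step is now to apply the defining condition of $\sigma_{\mathcal{M}}$ to $U$ and $M$. Because $\bigvee M = x \in U \in \sigma_{\mathcal{M}}$, we get $U \cap M \neq \varnothing$; pick any $m \in U \cap M$. From $m \in M \subseteq \twoheaddownarrow_{\mathcal{M}} x$ we read off $m \ll_{\mathcal{M}} x$, and from $m \in U \subseteq \ua Y$ we obtain some $y \in Y$ with $y \leq m$. Proposition~\ref{prop: transitivity of lli}(2) applied to $y \leq m \ll_{\mathcal{M}} x \leq x$ then yields $y \ll_{\mathcal{M}} x$, hence $x \in \twoheaduparrow_{\mathcal{M}} y \subseteq \twoheaduparrow_{\mathcal{M}} Y$, as required.

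The only genuinely delicate point I foresee is the preliminary remark that $U \in \sigma_{\mathcal{M}}$. Because $\sigma_{\mathcal{M}}$ is in general merely a union-closed family rather than a topology (\emph{cf.} Remark~\ref{rem: strict containment}), one cannot simply invoke ``the interior is open'' from general topology; instead one must use that the union of all members of $\sigma_{\mathcal{M}}$ sitting inside $\ua Y$ is itself in $\sigma_{\mathcal{M}}$, and this union equals $\intr_{\sigma_{\mathcal{M}}}(\ua Y)$ by definition. Everything else is a clean extension of the singleton case in Proposition~\ref{prop: transitivity of lli}(3).
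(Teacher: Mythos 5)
Your proof is correct and takes essentially the same route as the paper's: both show that $\intr_{\sigma_{\mathcal{M}}}(\ua Y)$ meets $\twoheaddownarrow_{\mathcal{M}} x$ by feeding the witnessing $M \in \mathcal{M}_{\vee}$ with $\da M = \twoheaddownarrow_{\mathcal{M}} x$ and $\bigvee M = x$ into the defining condition of $\sigma_{\mathcal{M}}$, and then conclude via Proposition~\ref{prop: transitivity of lli}(2). Your version merely spells out the details (notably that the $\sigma_{\mathcal{M}}$-interior is itself a member of $\sigma_{\mathcal{M}}$ because $\sigma_{\mathcal{M}}$ is union-closed) that the paper leaves implicit.
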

\begin{comment}
\begin{proof}
If $x \in \intr_{\sigma_{\mathcal{M}}}(\ua Y)$ then $\intr_{\sigma_{\mathcal{M}}}(\ua Y) \cap \twoheaddownarrow_{\mathcal{M}} x \neq \varnothing$, hence there exists a $z \in \intr_{\sigma_{\mathcal{M}}}(\ua Y) \subseteq \ua Y$ such that $z \ll_{\mathcal{M}} x$. Therefore, $y \ll_{\mathcal{M}} x$ for some $y \in P$.
\end{proof}
\end{comment}

\begin{lem} \label{prop: subset thadai x}
A poset $P$ is $\mathcal{M}$-continuous if and only if for all
$y \in P$ there is an $M \in \mathcal{M}_{\vee}$ with
$M \subseteq \twoheaddownarrow_{\mathcal{M}} y$ and $y \leq \bigsup M$.
\end{lem}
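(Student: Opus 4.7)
The plan is to prove both implications directly by unpacking the definitions, leaning on parts (1) and (2) of Proposition~\ref{prop: transitivity of lli}. Throughout, the key observation is that $\twoheaddownarrow_{\mathcal{M}} y$ is automatically a lower set (from transitivity) and that, for any subset $S$ of $P$, upper bounds of $S$ and of $\da S$ coincide, so the suprema agree whenever either exists.

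For the forward direction, I would assume $P$ is $\mathcal{M}$-continuous. The condition $\twoheaddownarrow_{\mathcal{M}} y \in \mathcal{M}^{\wedge}$ unpacks to the existence of some $M \in \mathcal{M}$ with $\twoheaddownarrow_{\mathcal{M}} y = \da M$. The containment $M \subseteq \da M = \twoheaddownarrow_{\mathcal{M}} y$ is then immediate, and the observation above yields $\bigsup M = \bigsup \da M = \bigsup \twoheaddownarrow_{\mathcal{M}} y = y$. In particular $M \in \mathcal{M}_{\vee}$ and $y \leq \bigsup M$, as required.

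For the converse, suppose each $y \in P$ admits some $M \in \mathcal{M}_{\vee}$ with $M \subseteq \twoheaddownarrow_{\mathcal{M}} y$ and $y \leq \bigsup M$. To see $y = \bigsup \twoheaddownarrow_{\mathcal{M}} y$, I would use Proposition~\ref{prop: transitivity of lli}(1) to note that $\twoheaddownarrow_{\mathcal{M}} y \subseteq \da y$, making $y$ an upper bound of $\twoheaddownarrow_{\mathcal{M}} y$; any competing upper bound also bounds $M$ from above and so dominates $\bigsup M \geq y$, giving $y$ as the least upper bound. To establish $\twoheaddownarrow_{\mathcal{M}} y \in \mathcal{M}^{\wedge}$, I would verify $\twoheaddownarrow_{\mathcal{M}} y = \da M$. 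One inclusion: Proposition~\ref{prop: transitivity of lli}(2) makes $\twoheaddownarrow_{\mathcal{M}} y$ a lower set, so $M \subseteq \twoheaddownarrow_{\mathcal{M}} y$ promotes to $\da M \subseteq \twoheaddownarrow_{\mathcal{M}} y$. For the reverse, the definition of $\ll_{\mathcal{M}}$ applied to the witness $M \in \mathcal{M}_{\vee}$ and $y \leq \bigsup M$ forces every $z \ll_{\mathcal{M}} y$ into $\da M$.

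The argument is essentially a mechanical unwinding of definitions, so I do not foresee any serious obstacle. The only delicate points are the two small observations flagged above — the coincidence of $\bigsup M$ with $\bigsup \da M$, and the use of Proposition~\ref{prop: transitivity of lli}(2) to know that $\twoheaddownarrow_{\mathcal{M}} y$ is a lower set — both of which follow quickly from what is already established.
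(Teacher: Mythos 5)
Your argument is correct and is exactly the direct unwinding of the definitions that the paper intends (it states this lemma without proof, remarking that it "follows immediately" from the definitions of $\mathcal{M}$-convergence and $\mathcal{M}$-continuity). Both directions check out: the forward direction correctly extracts $M$ from $\twoheaddownarrow_{\mathcal{M}} y = \da M$ using $\ub(M)=\ub(\da M)$, and the converse correctly combines Proposition~\ref{prop: transitivity of lli}(1)--(2) with the definition of $\ll_{\mathcal{M}}$ to get $\twoheaddownarrow_{\mathcal{M}} y = \da M$ and $y = \bigsup \twoheaddownarrow_{\mathcal{M}} y$.
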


\begin{lem} \label{lem: int implies way-above open}
If the relation $\ll_{\mathcal{M}}$ on a poset $P$ is interpolative then
$\twoheaduparrow_{\mathcal{M}} Y \in \sigma_{\mathcal{M}}$ for all $Y \subseteq P$.
\end{lem}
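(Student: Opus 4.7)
The plan is to verify the two defining conditions of $\sigma_{\mathcal{M}}$ for the set $\twoheaduparrow_{\mathcal{M}} Y$, making use of interpolation exactly where it is needed.

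First I would check that $\twoheaduparrow_{\mathcal{M}} Y$ is an upper set. This does not require interpolation at all: if $x \in \twoheaduparrow_{\mathcal{M}} Y$ and $x \leq z$, pick $y \in Y$ with $y \ll_{\mathcal{M}} x$ and apply Proposition~\ref{prop: transitivity of lli}(2) to get $y \ll_{\mathcal{M}} z$, whence $z \in \twoheaduparrow_{\mathcal{M}} Y$.

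Next I would verify the inaccessibility condition. Let $M \in \mathcal{M}_{\vee}$ with $\bigvee M \in \twoheaduparrow_{\mathcal{M}} Y$, so there is some $y \in Y$ with $y \ll_{\mathcal{M}} \bigvee M$. Here interpolation enters: pick $w \in P$ with $y \ll_{\mathcal{M}} w \ll_{\mathcal{M}} \bigvee M$. Applying the definition of $\ll_{\mathcal{M}}$ to $w \ll_{\mathcal{M}} \bigvee M$ with the member $M \in \mathcal{M}_{\vee}$ itself yields $w \in \da M$, i.e., $w \leq m$ for some $m \in M$. Then $y \ll_{\mathcal{M}} w \leq m$, so by Proposition~\ref{prop: transitivity of lli}(2) we have $y \ll_{\mathcal{M}} m$, giving $m \in M \cap \twoheaduparrow_{\mathcal{M}} Y$.

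There is no substantial obstacle: the argument is the routine translation of the classical fact that for a continuous poset each $\twoheaduparrow y$ is Scott open, once one has the transitivity clauses of Proposition~\ref{prop: transitivity of lli} and the interpolation hypothesis in hand. The only point requiring care is that one applies the definition of $w \ll_{\mathcal{M}} \bigvee M$ to $M$ itself (which lies in $\mathcal{M}_{\vee}$ by assumption), rather than to some auxiliary family, so that the conclusion $w \in \da M$ delivers a witness inside $M$ directly.
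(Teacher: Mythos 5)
Your proof is correct and is exactly the routine argument the paper has in mind (the paper omits the proof, presenting this lemma as immediate from the definitions): upper-closedness via Proposition~\ref{prop: transitivity of lli}(2), and inaccessibility by interpolating $y \ll_{\mathcal{M}} w \ll_{\mathcal{M}} \bigvee M$, applying the definition of $\ll_{\mathcal{M}}$ to $M$ itself to land $w$ in $\da M$, and then transporting $y \ll_{\mathcal{M}} w \leq m$ back up to $y \ll_{\mathcal{M}} m$. Your closing remark correctly identifies the one point where interpolation is genuinely needed, since $y \ll_{\mathcal{M}} \bigvee M$ alone only yields $y \leq m$ for some $m \in M$, not $y \ll_{\mathcal{M}} m$.
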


\begin{lem} \label{lem: prelude to interpolating}
If $P$ is an $\mathcal{M}$-continuous poset, then $y = \bigsup \twoheaddownarrow_{\mathcal{M}} (\twoheaddownarrow_{\mathcal{M}} y)$ for all $y \in P$.  Moreover,
$P$ is strongly continuous if and only if $\twoheaddownarrow_{\mathcal{M}} (\twoheaddownarrow_{\mathcal{M}} y) \in \mathcal{M}^{\wedge}$ and $y = \bigsup \twoheaddownarrow_{\mathcal{M}} (\twoheaddownarrow_{\mathcal{M}} y)$ for all $y \in P$.
\end{lem}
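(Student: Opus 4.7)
The plan is to prove the two assertions separately using only Proposition~\ref{prop: transitivity of lli} and the definitions, abbreviating $\twoheaddownarrow_{\mathcal{M}} y$ as $\twoheaddownarrow y$ throughout. For the first assertion, the key observation is that even without interpolation, every $z\in\twoheaddownarrow y$ is itself the supremum of $\twoheaddownarrow z$ by $\mathcal{M}$-continuity, so one has plenty of elements of $\twoheaddownarrow\twoheaddownarrow y$ below each $z$. I would first show that $y$ is an upper bound of $\twoheaddownarrow\twoheaddownarrow y$: any $x$ in this set satisfies $x\ll_{\mathcal{M}} z\ll_{\mathcal{M}} y$ for some $z$, so $x\le z\le y$ by Proposition~\ref{prop: transitivity of lli}(1). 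For minimality, let $u$ bound $\twoheaddownarrow\twoheaddownarrow y$ from above; since $\twoheaddownarrow z\subseteq\twoheaddownarrow\twoheaddownarrow y$ for every $z\in\twoheaddownarrow y$, $u$ also bounds $\twoheaddownarrow z$, whence $u\ge\bigsup\twoheaddownarrow z=z$ by $\mathcal{M}$-continuity at $z$; so $u$ bounds $\twoheaddownarrow y$ and $u\ge\bigsup\twoheaddownarrow y=y$.

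For the second assertion, assume first that $P$ is strongly $\mathcal{M}$-continuous. Interpolation supplies $\twoheaddownarrow y\subseteq\twoheaddownarrow\twoheaddownarrow y$, while Proposition~\ref{prop: transitivity of lli}(2) (combined with part~(1)) gives the reverse inclusion $\twoheaddownarrow\twoheaddownarrow y\subseteq\twoheaddownarrow y$, so the two sets coincide and the properties required by the conclusion are inherited from $\mathcal{M}$-continuity at $y$. Conversely, assume $\twoheaddownarrow\twoheaddownarrow y\in\mathcal{M}^{\wedge}$ with supremum $y$; choose $M\in\mathcal{M}$ satisfying $\da M=\twoheaddownarrow\twoheaddownarrow y$. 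Since $\bigsup\da M=\bigsup M$ whenever either side exists, $M$ lies in $\mathcal{M}_{\vee}$ with $\bigsup M=y$, and so the defining clause of $\ll_{\mathcal{M}}$ forces any $x$ with $x\ll_{\mathcal{M}} y$ into $\da M=\twoheaddownarrow\twoheaddownarrow y$, yielding an interpolant $z$ with $x\ll_{\mathcal{M}} z\ll_{\mathcal{M}} y$.

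The main obstacle I anticipate is the bookkeeping around $\mathcal{M}^{\wedge}$ versus $\mathcal{M}_{\vee}$: the former is a statement about lower-set generators and the latter about existence of suprema, so the ``if'' direction above hinges on bridging them via $\bigsup\da M=\bigsup M$. This is a minor but easily glossed-over point, and it is what makes the hypothesis $\bigsup\twoheaddownarrow\twoheaddownarrow y=y$ essential rather than cosmetic; without it, the generator $M$ might fail to belong to $\mathcal{M}_{\vee}$ and the defining clause of $\ll_{\mathcal{M}}$ would be inapplicable at $y$.
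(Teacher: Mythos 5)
Your argument is correct. The paper gives no proof of this lemma (it is listed among those that ``follow immediately from the definitions''), and your write-up supplies exactly the standard details one would expect: the least-upper-bound computation via $\twoheaddownarrow_{\mathcal{M}} z \subseteq \twoheaddownarrow_{\mathcal{M}}\twoheaddownarrow_{\mathcal{M}} y$ for the first part, and for the second part the identification $\twoheaddownarrow_{\mathcal{M}} y = \twoheaddownarrow_{\mathcal{M}}\twoheaddownarrow_{\mathcal{M}} y$ under interpolation together with the $\mathcal{M}^{\wedge}$-generator-versus-$\mathcal{M}_{\vee}$ bridge $\bigsup \da M = \bigsup M$, which is indeed the one point worth making explicit.
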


\section{$\mathcal{I}$-convergence and $\mathcal{I}$-continuous $\mathcal{I}$-stable spaces}
\label{sec: balanced and spaces}
From this juncture onwards, we consider the collection $\mathcal{I}$ of all irreducible sets in a given space $(X,\tau)$, which, by considering the specialisation poset of the space, is a special case of $\mathcal{M}$ given in Section \ref{sec: in the light of Ztheory}. Note that to generate $\mathcal{I}$, ones need to start with a $T_0$ space -- not just a general poset. %Our work will be more focused on the collection of all irreducible sets having a supremum, that is, $\mathcal{I}_{\vee}$.

\begin{rem}\label{rem: sigma I is a top}
Given any $T_0$ space $(X,\tau)$, $\tau_{\SI} = \tau \cap \sigma_{\mathcal{I}}$ is a topology on $X$. The topology $\tau_{\SI}$ can be strictly coarser than $\tau_{\mathcal{I}}$ as witnessed by the space $\mathbb{N}$ with the cofinite topology.
\end{rem}

Because continuity of a poset $P$ is precisely the characterising property for the Scott convergence in it to be topological, it is natural to ask whether a similar result holds for $\mathcal{I}$-continuity of spaces and $\mathcal{I}$-convergence.  In this section, we introduce some conditions under which the $\mathcal{I}$-convergence is topological.

\begin{defi}[$\mathcal{I}$-continuous space]
\label{defi: Irr-continuous}
A space $X$ is said to be \emph{$\mathcal{I}$-continuous} if its poset of specialisation is $\mathcal{I}$-continuous in the sense of Definition~\ref{defi: Z-continuous poset}.
\end{defi}

The following lemma can be verified easily.

\begin{lem}\label{lem: I-cts char}
A space $X$ is $\mathcal{I}$-continuous if and only if for every $y \in X$, $\twoheaddownarrow_{\mathcal{I}} y \in \mathcal{I}_{\vee}$ and $\bigvee \twoheaddownarrow_{\mathcal{I}} y = y$.
\end{lem}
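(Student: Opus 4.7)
The plan is to reconcile the condition appearing in Definition~\ref{defi: Z-continuous poset} (membership in $\mathcal{I}^{\wedge}$) with the lemma's condition (membership in $\mathcal{I}_{\vee}$). First I would observe that under the shared hypothesis $\bigvee \twoheaddownarrow_{\mathcal{I}} y = y$, the supremum of $\twoheaddownarrow_{\mathcal{I}} y$ certainly exists, so ``$\twoheaddownarrow_{\mathcal{I}} y \in \mathcal{I}_{\vee}$'' collapses to ``$\twoheaddownarrow_{\mathcal{I}} y \in \mathcal{I}$''. The task therefore reduces to showing, for each $y \in X$, the equivalence $\twoheaddownarrow_{\mathcal{I}} y \in \mathcal{I}^{\wedge} \iff \twoheaddownarrow_{\mathcal{I}} y \in \mathcal{I}$.

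For the easier direction, I would invoke Proposition~\ref{prop: transitivity of lli}(2): since $u \leq x \ll_{\mathcal{I}} y$ implies $u \ll_{\mathcal{I}} y$, the set $\twoheaddownarrow_{\mathcal{I}} y$ is automatically a lower set, so $\twoheaddownarrow_{\mathcal{I}} y = \da(\twoheaddownarrow_{\mathcal{I}} y)$ lies in $\mathcal{I}^{\wedge}$ as soon as it lies in $\mathcal{I}$. For the other direction, suppose $\twoheaddownarrow_{\mathcal{I}} y = \da E$ for some $E \in \mathcal{I}$; I would verify directly that the lower set $\da E$ is itself irreducible. The crucial input is the standard observation that closed sets in a $T_0$ space are lower sets for the specialisation order (if $A$ is closed and $x \leq z \in A$, then $x \in \cl(\{z\}) \subseteq A$). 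Given any closed covering $\da E \subseteq A_1 \cup A_2$, irreducibility of $E \subseteq \da E$ yields $E \subseteq A_i$ for some $i$, and downward closure of $A_i$ then promotes this inclusion to $\da E \subseteq A_i$, so $\da E \in \mathcal{I}$.

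The only mildly delicate step is precisely this irreducibility of $\da E$; everything else is bookkeeping between the definitions of $\mathcal{I}^{\wedge}$ and $\mathcal{I}_{\vee}$. Combining the two directions with the preliminary reduction, Definition~\ref{defi: Z-continuous poset} applied to $\mathcal{M} = \mathcal{I}$ is equivalent to the condition stated in the lemma, completing the characterisation.
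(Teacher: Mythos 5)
Your proof is correct: the reduction of $\mathcal{I}_{\vee}$ to $\mathcal{I}$ under the shared supremum condition, the use of Proposition~\ref{prop: transitivity of lli}(2) to see that $\twoheaddownarrow_{\mathcal{I}} y$ is a lower set, and the verification that the down-closure of an irreducible set is irreducible (via closed sets being lower sets for the specialisation order) are exactly the ingredients needed. The paper omits the argument as an easy verification, and yours is the standard one it has in mind.
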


%In general, the relation $\ll_{\mathcal{I}}$ on $\mathcal{I}$-continuous spaces does not enjoy the interpolation property.
For a continuous poset $P$ and a directed subset $D$ of $P$, the set $\twoheaddownarrow D$ is always directed, which implies $\ll$ is interpolative. This fact is important in obtaining the generalisations of Theorem \ref{thm: scott conv thm} for poset as in \cite{erne81} and \cite{zhaozhao05}. In the context of $T_0$ spaces, it is still unknown to us whether for an $\mathcal{I}$-continuous space $X$ the operation $\dda$  preserves irreducibility. Here we introduce a topological condition called $\mathcal{I}$-stability so that for any $\mathcal{I}$-stable $\mathcal{I}$-continuous space $X$, irreducibility is preserved under $\dda$, and hence $\ll_{\mathcal{I}}$ on the specialisation poset of $X$ is interpolative.

%The fact that the way-below relation $\ll$ on a continuous poset is interpolative is important in obtaining Theorem \ref{thm: scott conv thm}. However, it is still unknown to us whether the relation $\ll_{\mathcal{I}}$ on an $\mathcal{I}$-continuous space always enjoys the interpolation property. Here we introduce a topological condition called $\mathcal{I}$-stability so that $\ll_{\mathcal{I}}$ for any $\mathcal{I}$-stable $\mathcal{I}$-continuous space is interpolative.

\begin{defi}[$\mathcal{I}$-stable space] \label{defi: balanced spaces}
A $T_0$ space $(X,\tau)$ is said to be
\emph{(upwards) $\mathcal{I}$-stable} if
for any $U \in \tau$, $\dua U \in \tau$.
\end{defi}
\begin{example}
\begin{enumerate}
\item[{\rm (1)}] Every poset endowed with the Alexandroff topology is  $\mathcal{I}$-stable.
\item[{\rm (2)}] Scott spaces of continuous posets are $\mathcal{I}$-stable since $\ll = \ll_\mathcal{I}$ (see Examples \ref{ex:nice}(2)  and Remark \ref{rem:DI}).
\item[{\rm (3)}] Any $T_1$ space is $\mathcal{I}$-stable since $U = \dua U$ for every open set $U$.
\item[{\rm (4)}] The Johnstone space $\Sigma \mathbb{J}$ \cite{johnstone81} is not $\mathcal{I}$-stable since $\dua \mathbb{J} = \{(m, \infty) \mid m \in \mathbb{N}\}$ is not in $\sigma(\mathbb{J})$ as a consequence of the fact $\ll = \ll_{\mathcal{I}}$ (see Examples \ref{ex:nice}(4) and Remark \ref{rem:DI}).
\end{enumerate}
\end{example}

For the purpose of establishing an important theorem concerning $\mathcal{I}$-stable spaces, we need to recall the definition of the \emph{lower Vietoris topology}.  Let $(X,\tau)$ be a $T_0$ space.  We endow the space $\mathcal{I}$ of all irreducible subsets of $X$ with the lower Vietoris topology $\nu_{\mathcal{I}}$, i.e., the (subbasic) opens are of the form
\[
\diamond U:= \{E \in \mathcal{I} \mid E \cap U \neq \varnothing\},
\]
where $U \in \tau$.  It is easy to show that every $\nu_{\mathcal{I}}$-open is of the form $\diamond U$ for some $U\in \tau$.  Looking into the literature of $T_0$ spaces, what we have considered above is not entirely new.  For a topological space $(X,\tau)$, define the set $X^S := \{E \in \mathcal{I} \mid X - E \in \tau\}$ and endow it with the lower Vietoris topology $\nu_{\mathcal{I}}$ as above.
The resulting topological space $S(X) := (X^S,\nu_{\mathcal{I}})$ is known as the \emph{sobrification} of $(X,\tau)$~(see also \cite[Exercise V-4.9]{gierzetal03}, \cite[Section 8.2.3]{goubaultlarrecq2013} and \cite[Remark 3.3]{heckmannkeimel13}).

\begin{thm} \label{thm: fund thm on I-stable spaces}
Consider the following statements for an $\mathcal{I}$-continuous space $(X,\tau)$:
\begin{enumerate}

\item[{\rm (1)}] $(X,\tau)$ is sup-sober.

\item[{\rm (2)}] $(X,\tau)$ is $\mathcal{I}$-stable.

\item[{\rm (3)}] $\dda : (X,\tau) \lra (\mathcal{I},\nu_{\mathcal{I}}),~ x \mapsto \dda x$, is continuous.

\item[{\rm (4)}] $U \in \tau$ implies $\dua U = \intr_{\tau_{\SI}} (U)$.

\item[{\rm (5)}] $E \in \mathcal{I}$ implies $\dda E \in \mathcal{I}$.

\item[{\rm (6)}] $\ll_{\mathcal{I}}$ has the interpolation property.

\item[{\rm (7)}] $\dua Y = \intr_{\sigma_{\mathcal{I}}} (\ua Y)$ for all $Y \subseteq X$.

\item[{\rm (8)}] $\dua x \in \sigma_{\mathcal{I}}$ for all $x \in X$.

\item[{\rm (9)}] $\mathcal{I}$-convergence is topological and the topology generated by $\sigma_{\mathcal{I}}$ is exactly $\tau_{\mathcal{I}}$.

\end{enumerate}
In general, the following implications and equivalences hold:
\[
(1) \implies (2) \iff (3) \iff (4) \implies (5) \implies (6) \iff (7) \iff (8) \implies (9).
\]

\end{thm}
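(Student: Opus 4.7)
The strategy is to walk along the stated chain of implications, using the Section~\ref{sec: in the light of Ztheory} lemmas throughout. For $(1)\Rightarrow(2)$, Theorem~\ref{th:SI-infty} collapses $\tau$ to $\tau_{\SI}$, so for any $y \in U \in \tau$ the set $\dda y \in \mathcal{I}_\vee$ must meet $U$ by $\mathcal{I}$-continuity, whence $\dua U = U \in \tau$. The equivalence $(2)\Leftrightarrow(3)$ is the identity $\dda^{-1}(\diamond U) = \dua U$, which makes $\nu_{\mathcal{I}}$-continuity of $\dda$ and $\mathcal{I}$-stability the same condition. For $(2)\Leftrightarrow(4)$: the direction $(4)\Rightarrow(2)$ is immediate since $\intr_{\tau_{\SI}}(U) \in \tau_{\SI} \subseteq \tau$, while for $(2)\Rightarrow(4)$ I would first observe that $\tau$-openness of $U$ yields $\ua U = U$, so Lemma~\ref{lem:interior_in-Z-cts} and $\tau_{\SI} \subseteq \sigma_{\mathcal{I}}$ together give $\intr_{\tau_{\SI}}(U) \subseteq \dua U$; the reverse inclusion asks that $\dua U$ itself be $\tau_{\SI}$-open---$\tau$-openness is the stability hypothesis, upper-ness is trivial, and what remains is the $\mathcal{I}_\vee$-inaccessibility of $\dua U$.

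For $(4)\Rightarrow(5)$: given $E \in \mathcal{I}$, stability translates $\dda E \cap U_i \neq \varnothing$ into $E \cap \dua U_i \neq \varnothing$ for $\tau$-opens $U_1, U_2$, and irreducibility of $E$ then produces $z \in E \cap \dua U_1 \cap \dua U_2$; the irreducibility of $\dda z \in \mathcal{I}_\vee$ (Lemma~\ref{lem: I-cts char}) supplies a common point in $\dda z \cap U_1 \cap U_2 \subseteq \dda E$. For $(5)\Rightarrow(6)$: Lemma~\ref{lem: prelude to interpolating} gives $y = \bigvee \dda(\dda y)$, and (5) upgrades $\dda(\dda y)$ to a member of $\mathcal{I}_\vee$; an $x \ll_{\mathcal{I}} y$ therefore satisfies $x \leq w$ for some $w \in \dda(\dda y)$, i.e., $x \leq w \ll_{\mathcal{I}} z \ll_{\mathcal{I}} y$ for some intermediary $z$, and Proposition~\ref{prop: transitivity of lli}(2) closes the interpolation as $x \ll_{\mathcal{I}} z \ll_{\mathcal{I}} y$.

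The equivalences $(6)\Leftrightarrow(7)\Leftrightarrow(8)$ are routine: Lemma~\ref{lem: int implies way-above open} lifts interpolation to $\dua Y \in \sigma_{\mathcal{I}}$ for all $Y$, Lemma~\ref{lem:interior_in-Z-cts} supplies the reverse containment $\intr_{\sigma_{\mathcal{I}}}(\ua Y) \subseteq \dua Y$, $(7)\Rightarrow(8)$ is specialisation to singletons, and $(8)\Rightarrow(7)$ rests on $\dua Y = \bigcup_{y\in Y} \dua y$ being a union of $\sigma_{\mathcal{I}}$-sets. For $(8)\Rightarrow(9)$, given $U \in \tau_{\mathcal{I}}$ and $x \in U$, Lemma~\ref{lem: tau_m char}(2) combined with $\mathcal{I}$-continuity furnishes a finite $N \subseteq \dda x$ with $\ub(N) \subseteq U$, so $x \in \bigcap_{w \in N}\dua w \subseteq U$; by (8) this exhibits $\sigma_{\mathcal{I}}$ as a subbase for $\tau_{\mathcal{I}}$, and Lemma~\ref{lem: topology ind by I conv is finer than the irred der topology} gives the matching inclusion. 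The topological character of $\mathcal{I}$-convergence then drops out: $\dda x$ is the required irreducible witness for any net that converges to $x$ in the subbase topology, since each $\dua y$ for $y \ll_{\mathcal{I}} x$ is $\sigma_{\mathcal{I}}$-open and hence $\tau_{\mathcal{I}}$-open.

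The main obstacle is securing the $\mathcal{I}_\vee$-inaccessibility step inside $(2)\Rightarrow(4)$: given $M \in \mathcal{I}_\vee$ with $\bigvee M \in \dua U$, one must produce some $m \in M \cap \dua U$ without access to interpolation, which is only proved later in the chain at step (6). I expect the argument to lean on $\mathcal{I}$-continuity of the elements of $M$ (through $m = \bigvee \dda m$) together with the stability of $\tau$ under $\dua$, applied to an appropriate $\tau$-open neighbourhood built from the irreducibility of $M$ with respect to $\tau$-open sets.
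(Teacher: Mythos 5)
Your chain is mostly sound and close to the paper's, but there is one genuine gap: you never complete $(2)\Rightarrow(4)$. You reduce it to showing that $\dua U$ is inaccessible by suprema of members of $\mathcal{I}_{\vee}$ and then stop, remarking that interpolation ``is only proved later in the chain at step (6)'' and speculating about a workaround. This reflects a misreading of the logical structure of the theorem: in your own argument the statements $(6)$, $(7)$, $(8)$ are not derived \emph{from} $(4)$ --- you obtain $(5)$, hence $(6)$ and $(7)$, from $(2)$ alone (your ``$(4)\Rightarrow(5)$'' argument in fact uses only $\mathcal{I}$-stability and $\mathcal{I}$-continuity) --- so there is no circularity in invoking $(7)$ when proving $(4)$ from $(2)$. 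This is exactly what the paper does: it first establishes $(2)\Leftrightarrow(3)\Rightarrow(5)\Rightarrow(6)\Leftrightarrow(7)$ and only then shows $(2)+(7)\Rightarrow(4)$, via $\dua U=\intr_{\sigma_{\mathcal{I}}}(\ua U)=\intr_{\sigma_{\mathcal{I}}}(U)$ together with $\dua U\in\tau$, so that $\dua U\in\tau\cap\sigma_{\mathcal{I}}=\tau_{\SI}$, giving $\dua U\subseteq\intr_{\tau_{\SI}}(U)\subseteq\intr_{\sigma_{\mathcal{I}}}(U)=\dua U$. With that assembly step your proof closes; without it, the claimed equivalence $(2)\Leftrightarrow(4)$ remains unproven.

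The remaining steps are correct and differ from the paper only cosmetically: the paper proves $(3)\Rightarrow(5)$ by observing that $\{\dda e\mid e\in E\}$ is irreducible in $(\mathcal{I},\nu_{\mathcal{I}})$ and that $\dda E$ is its union, whereas you argue directly with the opens $\dua U_i$; and the paper closes the $(6)$--$(8)$ cycle via a direct $(8)\Rightarrow(6)$, while you go $(8)\Rightarrow(7)$ using that $\sigma_{\mathcal{I}}$ is closed under arbitrary unions together with Lemma~\ref{lem:interior_in-Z-cts}. Both variants are fine. Your $(1)\Rightarrow(2)$, $(2)\Leftrightarrow(3)$, $(5)\Rightarrow(6)$ and $(8)\Rightarrow(9)$ match the paper's arguments.
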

\begin{proof}
(1) $\implies$ (2): If $u\in U\in\tau$, then, by $\mathcal{I}$-continuity and Theorem \ref{thm: char of sup-sober space}, $\dda u\cap U\neq\varnothing$, and hence $U\subseteq\dua U\subseteq U$.

(4) $\implies$ (2): Obvious since $\tau_{\SI}\subseteq \tau$.

(2) $\iff$ (3):
Since $(X,\tau)$ is an $\mathcal{I}$-continuous space, by Lemma~\ref{lem: I-cts char}, the mapping $\dda$ in (3) is well-defined.  That $\dda$ is continuous with respect to the topologies $\tau$ and $\nu_{\mathcal{I}}$ if and only if $X$ is $\mathcal{I}$-stable follows from the direct calculations below: for any $U \in \tau$, we have
\[
\dda^{-1}(\diamond U)
= \{x \in X \mid \dda x \cap U \neq \varnothing\}
= \{x \in X \mid \exists y \in U.~x \in \dua y\}= \dua U.
\]

(3) $\implies$ (5): By (3), $\mathcal{E}:=\{\dda e\mid e\in E\}$ is irreducible in $(\mathcal{I},\nu_{\mathcal{I}})$. Note that $\dda E=\bigcup\mathcal{E} $. For any finite number $n$ of open sets $U_i$ with $\dda E\cap U_i\neq\varnothing$ $(i < n)$, it holds that $\mathcal{E}\cap \diamond U_i\neq\varnothing$ $(i <n)$, and hence $\mathcal{E}\cap \bigcap_{i<n} \diamond U_i=\mathcal{E}\cap\diamond\bigcap_{i<n}U_i\neq\varnothing$. So $\dda E\cap \bigcap_{i<n}U_i\neq \varnothing$.
%Obvious since continuous image of irreducible set is irreducible.

(5) $\implies$ (6):
By $\mathcal{I}$-continuity and Lemma~\ref{lem: I-cts char}, for every $z\in X$, $\dda z \in \mathcal{I}$, and thus $\dda (\dda z) \in \mathcal{I}$ by (5).
By Lemma~\ref{lem: prelude to interpolating}, $\bigsup \dda (\dda z) = z$, and so $x \ll_{\mathcal{M}} z$ implies there exists a $y \in \dda z$ such that $x \in \dda y$.

(6) $\implies$ (7): By virtue of Lemmas~\ref{lem:interior_in-Z-cts} and~\ref{lem: int implies way-above open} specialised to $\mathcal{M} = \mathcal{I}$ (by considering the specialisation poset of $(X,\tau)$).

(7) $\implies$ (8): Trivial.

(8) $\implies$ (6): Assume $z \in \dua x$. By $\mathcal{I}$-continuity and Lemma~\ref{lem: I-cts char}, $\dda z \in \mathcal{I}_{\vee}$ and $\bigsup \dda z = z$.  So, by (8), $\dda z \cap \dua x\neq\varnothing$, and hence $\ll_{\mathcal{I}}$ has the interpolation property.

(2)+(7) $\implies$ (4): By (2), (7), and Lemma \ref{lem: topology ind by I conv is finer than the irred der topology}, $\dua U\in\tau\cap \sigma_{\mathcal{I}}=\tau_{\SI}$, and hence $\dua U \subseteq \intr_{\tau_{\SI}}(U)$. By virtue of  (2) and (7) and the fact that $\tau_{\SI}\subseteq \sigma_{\mathcal{I}}$, we have that $\intr_{\tau_{\SI}}(U)\subseteq \intr_{\sigma_{\mathcal{I}}}(U)= \dua U\subseteq \intr_{\tau_{\SI}}(U)$. Therefore, $\dua U = \intr_{\tau_{\SI}}(U)$.

(8) $\implies$ (9): By definition of $\tau_{\mathcal{I}}$, $n \xrightarrow{\mathcal{I}} y$ implies $n \xrightarrow{\tau_\mathcal{I}} y$. Now let $n \xrightarrow{\tau_{\mathcal{I}}} y$. By $\mathcal{I}$-continuity, $\dda y \in \mathcal{I}_{\vee}$ and $y = \bigvee \dda y$. If $x \ll_{\mathcal{I}} y$, then, by (8) and Lemma~\ref{lem: topology ind by I conv is finer than the irred der topology}, $y \in \dua x \in \tau_\mathcal{I}$. Hence $n$ is eventually in $\dua x$. By Proposition \ref{prop: transitivity of lli}(1)  we have that $x \in \elb{n}$.  Therefore $n \xrightarrow{\mathcal{I}} y$. We have that $\mathcal{I}$-convergence is topological. From Lemma~\ref{lem: topology ind by I conv is finer than the irred der topology}, we already have that $\sigma_{\mathcal{I}} \subseteq \tau_{\mathcal{I}}$. Now let $x \in U \in \tau_\mathcal{I}$. By $\mathcal{I}$-continuity and Lemma \ref{lem: tau_m char}, there exist $y_1, y_2,\ldots, y_k \in \twoheaddownarrow_{\mathcal{I}} x$ such that $\ub(\{y_1, y_2,\ldots, y_k\}) \subseteq U$. By virtue of (8), $U_i:= \twoheaduparrow_{\mathcal{I}}y_i \in \sigma_{\mathcal{I}}$ for each $i=1,2,\ldots,n$. Clearly, $x \in V:=\bigcap_{i=1}^k U_i$. If $y \in V$, then by Proposition \ref{prop: transitivity of lli}(1), $y \in \ub(\{y_1,y_2,\ldots, y_k\})$, and hence $V\subseteq U$. This completes the proof.
\end{proof}

We now turn our attention to a certain class of $T_0$ spaces for which $\mathcal{I}$-continuity is a necessary condition for the $\mc{I}$-convergence to be topological.

While any directed set $D$ with existing supremum automatically defines a net that $\mathcal{S}$-converges to $\bigvee D$, given $E \in \mathcal{I}_{\vee}$, one cannot always guarantee the existence of a net whose terms are in $E$ (or $\da E$) and that $\mathcal{I}$-converges to $\bigvee E$.
Finally, we decide to look at spaces in which every irreducible set can be `mimicked' by some directed subset of its lower closure in some sense, and $C$-spaces provide us with some inspiration.  A $T_0$ space $(X,\tau)$ is a \emph{$C$-space} if for any $x \in X$ and $U \in \tau$ with $x \in U$, there exists a $y \in U$ such that $x \in \intr_\tau (\ua y)$.  It is well-known that the Scott space of any continuous domain is a $C$-space (see, e.g., \cite{erne91,erne05}). With regards to irreducible sets, $C$-spaces have a very pleasing property:
\begin{propC}[{\cite[Lemma 6]{erne05}}] \label{prop: motivation for niceness}
%[Lemma 7.4]{zhaoho15}
Let $X$ be a $C$-space and $E \in \mathcal{I}$.  Then there exists a directed subset $D$ of $\da E$ with the same set of upper bounds as $E$.  In particular, $\bigsup D = \bigsup E$, if either exists.
\end{propC}
So the responsibility of having a net that $\mathcal{I}$-converges to the supremum of an irreducible set can be passed on to some directed subset of it.  Singling out this property, we formulate the following definition:
\begin{defi}[$\mathcal{DI}$ space] \label{defi: nice spaces}
A $T_0$ space $X$ is said to be \emph{$\mathcal{DI}$} if for each $E \in \mathcal{I}_{\vee}$ there exists a directed subset $D$ of $\da E$ such that $\bigsup D = \bigsup E$.
\end{defi}

\begin{example}\label{ex:nice}
\begin{enumerate}
\item[{\rm (1)}] Every poset endowed with the Alexandroff topology is $\mathcal{DI}$.
\item[{\rm (2)}] Any Scott space of a continuous poset is a $C$-space (see \cite[Theorem 4]{erne05}), hence $\mathcal{DI}$ by Proposition \ref{prop: motivation for niceness}.
\item[{\rm (3)}] Any $T_1$ space is $\mathcal{DI}$ since the only subsets having a supremum are singletons.
\item[{\rm (4)}] The Johnstone space \cite{johnstone81} is a $\mathcal{DI}$ because of the following fact: if $E \in \mathcal{I}_{\vee}$ and
\begin{enumerate}
\item $\bigvee E = (m,n)$ for some $m,n \in \mathbb{N}$, then $(m,n) \in E$.
\item $\bigvee E = (m, \infty)$ for some $m \in \mathbb{N}$, then $(m, \infty) \in E$ or $\da E \cap \{(m,n) \mid n \in \mathbb{N}\}$ is a directed set whose supremum is $(m, \infty)$.
\end{enumerate}
The Johnstone space is not a C-space due to the fact that the interior of any principal filter is an empty set.

\item[{\rm (5)}] The space $\Sigma \mathbb{J}^*$ given in Appendix~\ref{sec: appendix} is not $\mathcal{DI}$ since $\mathbb{J} \in \mathcal{I}_{\vee}$ but there is no directed subset of $\mathbb{J}$ having $\top = \bigvee \mathbb{J}$ as supremum.

\item[{\rm (6)}] $\Sigma\mathbb{D}$ given in Appendix~\ref{sec: appendix} is not $\mathcal{DI}$ since the lower set $\mathbb{N} \times \mathbb{N} \in \mathcal{I}_{\vee}$ contains no directed subset whose supremum is $\top = \bigvee (\mathbb{N} \times \mathbb{N})$.
\end{enumerate}
\end{example}

Although $\mathcal{DI}$ spaces seem to have circumvented the problem of defining certain nets out of irreducible sets with existing supremum, such spaces do not give any essentially new results as the remark below reveals.

\begin{rem}\label{rem:DI}
The following conditions hold for any $\mathcal{DI}$ space $(X,\tau)$:
\begin{enumerate}
\item $x \lli y$ if and only if $x \ll y$ in the poset $(X,\leq_\tau)$,
\item $n \siconv x$ in $(X,\tau)$ if and only if $n \xrightarrow{\mc{S}} x$ in $(X,\leq_\tau)$,
\item $(X,\tau)$ is $\mathcal{I}$-continuous if and only if $(X,\leq_\tau)$ is continuous.
\end{enumerate}
\end{rem}

All in all, we have arrived at:
\begin{thm}[Topological Scott Convergence Theorem]~
\label{thm: iter_lim_implies_SIcts}
\begin{enumerate}
\item[{\rm (1)}]
If $(X,\tau)$ is an $\mathcal{I}$-stable $\mathcal{I}$-continuous space, then the $\mathcal{I}$-convergence in $(X,\tau)$ is topological. In particular, $\tau_\mathcal{I}$ coincides with the topology generated by $\sigma_{\mathcal{I}}$.
\item[{\rm (2)}]
If $X$ is a $\mathcal{DI}$ space in which $\mathcal{I}$-convergence is topological, then $X$ is $\mathcal{I}$-continuous.
\end{enumerate}
\end{thm}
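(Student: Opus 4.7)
For part (1), the statement is essentially a restatement of the implications already catalogued in Theorem \ref{thm: fund thm on I-stable spaces}. The plan is therefore simply to invoke the chain $(2) \Longrightarrow (3) \Longrightarrow (5) \Longrightarrow (6) \Longrightarrow (7) \Longrightarrow (8) \Longrightarrow (9)$ established there. Since $(X,\tau)$ is assumed both $\mathcal{I}$-stable and $\mathcal{I}$-continuous, hypothesis (2) of the fundamental theorem is met, so conclusion (9) follows: $\mathcal{I}$-convergence is topological and the topology it induces is exactly the one generated by $\sigma_{\mathcal{I}}$. No further work is needed beyond citing the fundamental theorem.

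For part (2), the plan is to reduce the topological statement to the classical Scott Convergence Theorem via the $\mathcal{DI}$ hypothesis. First, I would invoke Remark \ref{rem:DI}(2): in a $\mathcal{DI}$ space, $\mathcal{I}$-convergence $\siconv$ in $(X,\tau)$ coincides pointwise with the classical $\mathcal{S}$-convergence computed in the specialisation poset $(X,\leq_\tau)$. Consequently, if $\siconv$ is topological, so is $\mathcal{S}$-convergence in $(X,\leq_\tau)$ (it is the very same relation between nets and points). Next, I would apply the poset-theoretic Scott Convergence Theorem in the general form due to Zhao--Zhao (or equivalently the $m=3$ instance of Erné's Corollary 2.14 referred to in the Introduction), which characterises the posets whose Scott convergence is topological as exactly the continuous ones. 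This yields continuity of the specialisation poset $(X,\leq_\tau)$. Finally, Remark \ref{rem:DI}(3) converts classical continuity of $(X,\leq_\tau)$ back into $\mathcal{I}$-continuity of $(X,\tau)$, completing the argument; as a by-product we obtain the parenthetical claim that $X$ is continuous with respect to its specialisation order.

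The main obstacle, and the reason this argument goes through so cleanly, is Remark \ref{rem:DI}: without it one would have to argue $\mathcal{I}$-continuity directly, which would mean producing, for each $y$, an irreducible set in $\twoheaddownarrow_{\mathcal{I}} y$ with supremum $y$, and this is genuinely delicate in spaces where irreducible sets are not automatically replaceable by directed ones. The $\mathcal{DI}$ hypothesis is precisely what makes that replacement possible and lets the whole problem collapse to the classical poset case. Thus the only substantive verification needed in writing up the proof is that Remark \ref{rem:DI} indeed delivers both directions of the equivalence between $\siconv$ and $\xrightarrow{\mathcal{S}}$ on $\Psi X \times X$, and that the poset Scott Convergence Theorem we cite is stated for general posets (not merely dcpos), so that no completeness assumption on $(X,\leq_\tau)$ is smuggled in.
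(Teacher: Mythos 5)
Your proposal is correct and matches the paper's intended argument exactly: part (1) is the implication $(2)\Rightarrow(9)$ of Theorem~\ref{thm: fund thm on I-stable spaces} applied to an $\mathcal{I}$-continuous space, and part (2) reduces via Remark~\ref{rem:DI} to the poset-level Scott Convergence Theorem for general posets (Zhao--Zhao / Ern\'e's $m=3$ case), then transfers back. The paper states the theorem without written proof precisely because these two ingredients, which you have identified, are all that is needed.
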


\section{Conclusion}
\label{sec: conclusion}
In this paper, we have proven a topological variant of the Scott Convergence Theorem as promised in the introduction.  The key strategy used in our approach is the $\mathcal{ID}$ replacement principle, i.e., replace the directed sets by irreducible sets for a given domain-theoretic definition.  Irreducible sets play an important role in topology, and particularly in domain theory.  Recently, (Scott) irreducible sets were actively employed to solve the so-called Ho-Zhao problem~\cite{hojungxi16}.
By examining what domain-theoretic results can be lifted to the higher plains of $T_0$ spaces, we manage to open up, otherwise uncharted, research areas in the study of $T_0$ spaces, e.g., new concepts like $\mathcal{I}$-stable spaces and the $\mathcal{DI}$ spaces.  We end our paper with some research problems, which we hope to spur research efforts in developing Lawson-style domain theory.

\paragraph{\textbf{Hunting for counterexamples.}} Theorem~\ref{thm: iter_lim_implies_SIcts}(1) relies on the interpolation property enjoyed by $\lli$ in an  $\mathcal{I}$-stable $\mathcal{I}$-continuous space.  It is natural to ask whether $\lli$ always satisfies the interpolation property for every $\mathcal{I}$-continuous space. If the answer is positive, then the $\mathcal{I}$-stability in Theorem \ref{thm: iter_lim_implies_SIcts}(1) can be removed. If otherwise, then another question arises, asking whether there is an $\mathcal{I}$-continuous space in which $\mathcal{I}$-convergence is not topological. The first place to look for possible examples of such $\mathcal{I}$-continuous spaces should be among the non-$\mathcal{I}$-stable ones.  However, at the moment, even an example of an $\mathcal{I}$-continuous space which is not $\mathcal{I}$-stable evades us. Our lack of ability in finding such an example leads us to the reality that Theorem \ref{thm: fund thm on I-stable spaces} is not entirely satisfactory.

\paragraph{\textbf{$\mathcal{I}$-stable, $\mathcal{I}$-continuous, sup-sober and $\mathcal{DI}$.}} In this paper, we have introduced different kinds of $T_0$ spaces, namely, $\mathcal{I}$-stable, $\mathcal{I}$-continuous and $\mathcal{DI}$, and involved a new variant of sober spaces called sup-sober.  Table~\ref{table1} (see Appendix~\ref{sec: appendix}) whose column headers are these four properties (listed in the following order: $\mathcal{I}$-stable, $\mathcal{I}$-continuous, sup-sober, $\mathcal{DI}$) collates as many examples of $T_0$ spaces currently known to us that possess a range of different  parity-configurations. Here, $+$ (respectively, $-$) indicates the presence (respectively, absence) of the corresponding property in that position.
The paucity of examples and counterexamples that distinguish these four topological properties points towards our patchy understanding of these properties and their interrelationships.  One such known interrelationship is exemplified by Theorem~\ref{thm: fund thm on I-stable spaces} ((1) $\implies$ (2)) which asserts that  ``$\mathcal{I}$-continuous and sup-sober imply $\mathcal{I}$-stable".  Because of this, no space of configurations ``$-~ +~ +~ +$'' or ``$- ~+ ~+ ~-$'' can exist.  Our ignorance is fully exposed by the lack of examples with the following configurations:
\begin{multicols}{4}
\begin{enumerate}[(1)]
\item ``$+ ~+~ +~ -$''
\item ``$+ ~+~ -~ -$''
\item ``$- ~+ ~- ~+$''
\item ``$-~ +~ -~ -$''
\item ``$-~ -~ +~ -$''
\item ``$- ~- ~- ~+$''
\item ``$-~ -~ -~ -$''
\end{enumerate}
\end{multicols}

\paragraph{\textbf{A closer look at $\mathcal{I}$-stable $\mathcal{I}$-continuous spaces.}} In view of the several pleasant properties that $\mathcal{I}$-stable $\mathcal{I}$-continuous spaces possess (e.g., the all-important interpolation property), it seems to us that such spaces deserve a closer look.  We ask the following questions:
\begin{enumerate}
\item What closure properties does the class of all $\mathcal{I}$-stable $\mathcal{I}$-continuous spaces possess?

\item For any domain $P$, the collection of sets of the form $\dua x := \{p \in P \mid x \ll p\}$ (where $x \in P$) form a base for the Scott topology on $P$~\cite[Proposition II-1.10]{gierzetal03}.
    Are there such convenient bases for $\mathcal{I}$-stable $\mathcal{I}$-continuous spaces?

\item The lattice of Scott opens of a dcpo is completely distributive if and only if it is a domain~\cite{Hoffmann1981}. Is there a similar characterisation for the lattice of $\tau_{\SI}$-opens for a space $(X,\tau)$ which has all irreducible suprema?
It is also interesting to ask if $\mathcal{I}$-continuity is a necessary and sufficient condition for the Scott derivative of an $\mathcal{I}$-stable space to be a $C$-space.

%\item The injective $T_0$ spaces are exactly the Scott spaces of continuous lattices~\cite{scott72a}.  Are the injective $\mathcal{I}$-stable spaces exactly the Scott derivative of $\mathcal{I}$-continuous spaces?
\end{enumerate}

\paragraph{\textbf{How far can we go in this research program?}}
%While the experience of developing the core of domain theory in the topological setting brings in new topological discoveries and refreshing perspectives from time to time, we are constantly forced to perform a reality-check -- are we doing things that differently from domain theory?  After all, as J. Goubault-Larrecq sums it up in the introduction of his book~\cite{goubaultlarrecq2013} that ``in several aspects, domain theory is \emph{topology done right}".
While some of the research findings that we presented here persuade us that domain theory motivates the creation of new topological concepts (e.g., $\mathcal{I}$-stable spaces), there is also confounding evidence that some of these `new' considerations (e.g., $\mathcal{DI}$ spaces) eventually lead us back to the good old domain theory!

We pause for a quick reflection.  Our current research methodology relies solely on the $\mathcal{ID}$ replacement principle, trusting that the collection of all irreducible sets is a salient substitute for the collection of all directed sets in the $T_0$ space setting.  Perhaps, this view is too narrow.  In view of this, we should aim to build a theory flexible enough to allow us to consider other kinds of collections of subsets of a $T_0$ space. There is evidence that the $\mathcal{M}$-subset system and $\mathcal{Z}$-subset selection are powerful tools of generalising domain theory~\cite{bandelterne83, bandelterne84, baranga96, erne81a, erne99, menon96, venugopalan86, yao11, zhao92, zhouzhao07}. One possible approach is to lift $\mathcal{M}$-subset system theory to the topological level!  Already in this paper we see some semblance of this for the specific case of $\mathcal{M} :=\mathcal{I}$, where the collection of all irreducible subsets of a $T_0$ space $X$ has been topologised with the lower Vietoris topology.  With some hard work, we should be able to realise the manifestation of $\mathcal{M}$- and $\mathcal{Z}$-theory in the topological realm.

\section*{Acknowledgement}

We are grateful to the anonymous referees for their valuable suggestions, which helped to eliminate several errors in former versions of this paper and to improve the presentation and results considerably.  We credit the first reviewer for his formulation of the statements and results as presented now in Section \ref{sec: in the light of Ztheory}, and in particular, Theorem~\ref{thm: fund thm on I-stable spaces}.

\newpage

\appendix \section{Some examples}
\label{sec: appendix}
\paragraph{\textbf{Notations.}} For a poset $P=(X,\leq)$,
\begin{itemize}
\item[] $\textup{A} P:= (X,\alpha(P))$ is the Alexandroff space of $P$,
\item[] $\Sigma P := (X,\sigma(P))$ is the Scott space of $P$,

\item[] $\Upsilon P := (X,\nu(P)$ is the (weak) upper space of $P$.
\end{itemize}
We define the following posets:
\begin{itemize}
\item $\mathbb{R}_2:=(\{1,2\}\times \mathbb{R})\cup \{\top\}$ ordered by $\leq$ defined as follows:
\begin{enumerate}[(i)]
\item $(m,n)\leq \top$ for each $(m,n)$ and
\item $(m,n)\leq (p,q)\t{ if and only if }(m=p \t{ and }n\leq q)$.
\end{enumerate}
\item $T=\{(1-\frac{1}{n},0)\mid n\in \mathbb{N}\}\cup \{(0,1),(1,1)\}$ ordered componentwise by $\leq$.

\item $\mathbb{D}=\brackets{\mathbb{N}\times (\mathbb{N}\cup \{\infty\})}\cup \{\top\}$ ordered by $\leq$ defined as follows:
\begin{enumerate}[(i)]
\item $(m,n)\leq \top$ for each $(m,n)$ and
\item $(m,n)\leq (p,q)\t{ if and only if }(m=p \t{ and }n\leq q) \t{ or }(m\leq p \t{ and }n\leq q=\infty)$.

\end{enumerate}

\item $\mathbb{J}$ is the Johnstone poset \cite{johnstone81}, and $\mathbb{J}^*$ is obtained by adding a top element.
\end{itemize}
We then have the following table containing classification of some examples.

\begin{table}[ht]
\centering
\caption{Some Examples}
\label{table1}
\def\arraystretch{1.4}
\begin{tabular}{|l|c|c|c|c|l|N}
\hline
$T_0$ space                     & $\mathcal{I}$-stable & $\mathcal{I}$-cont. & sup-sober         & $~~\mathcal{DI}~~$ & \multicolumn{1}{c|}{remarks}                                                                                      & \\ \hline
$T_1$ space $(X,\tau)$          & $+$                  & $+$                 & $+$               & $+$            & $\nu(P)\subseteq \tau\subseteq \sigma(P)=\alpha(P)$                                  & \\ \hline
$\Sigma P$ ($P$ is cont.)     & $+$                  & $+$                 & $+$               & $+$            & $\nu(P)\subseteq \sigma(P)\subseteq \alpha(P)$                                                                     & \\ \hline
$\textup{A} P$                 & $+$                  & a               & b & $+$            & \begin{tabular}[c]{@{}l@{}}$\nu(P)\subseteq \sigma(P)\subseteq \alpha(P)$\\ a. $A P$ is $\mathcal{I}$-cont. if and\\ \phantom{a. }only if $P$ is cont.\\ b. If $AP$ is sup-sober then\\ \phantom{b. }$AP$ is $\mathcal{I}$-cont.\end{tabular} & \\ \hline
$\textup{A} \mathbb{R}$        & $+$                  & $+$                 & $-$               & $+$            & $\nu(\mathbb{R})= \sigma(\mathbb{R})\subset \alpha(\mathbb{R})$                                                                     & \\ \hline
$\Sigma \mathbb{R}_2$        & $+$                  & $-$                 & $+$               & $+$            & $\nu(\mathbb{R}_2)=\sigma(\mathbb{R}_2)\subset \alpha(\mathbb{R}_2)$                                                                     & \\ \hline
$\textup{A} T$                 & $+$                  & $-$                 & $-$               & $+$            & $\nu(T)= \sigma(T)\subset \alpha(T)$                                                                     & \\ \hline
$\Sigma \mathbb{D}$             & $+$                  & $-$                 & $+$               & $-$            & $\nu(\mathbb{D})\subset \sigma(\mathbb{D})\subset \alpha(\mathbb{D})$                                                                     & \\ \hline
$\Sigma \mathbb{J}$             & $-$                  & $-$                 & $+$               & $+$            & $\nu(\mathbb{J})\subset \sigma(\mathbb{J})\subset \alpha(\mathbb{J})$                                        & \\ \hline
$\Sigma \mathbb{J}^*$           & $+$                  & $-$                 & $-$               & $-$            & $\nu(\mathbb{J}^*)\subset \sigma(\mathbb{J}^*)\subset \alpha(\mathbb{J}^*)$                                        & \\ \hline
\end{tabular}
\end{table}
\end{document}